\begin{document}
\newtheorem{theorem}{Theorem}
\newtheorem{acknowledgement}[theorem]{Acknowledgement}
\newtheorem{axiom}[theorem]{Axiom}
\newtheorem{case}[theorem]{Case}
\newtheorem{claim}[theorem]{Claim}
\newtheorem{conclusion}[theorem]{Conclusion}
\newtheorem{condition}[theorem]{Condition}
\newtheorem{conjecture}[theorem]{Conjecture}
\newtheorem{criterion}[theorem]{Criterion}
\newtheorem{definition}{Definition}
\newtheorem{exercise}[theorem]{Exercise}
\newtheorem{lemma}{Lemma}
\newtheorem{corollary}{Corollary}
\newtheorem{notation}[theorem]{Notation}
\newtheorem{problem}[theorem]{Problem}
\newtheorem{proposition}{Proposition}
\newtheorem{solution}[theorem]{Solution}
\newtheorem{summary}[theorem]{Summary}
\newtheorem{assumption}{Assumption}
\newtheorem{example}{\bf Example}
\newtheorem{remark}{\bf Remark}
\newtheorem{design}{\bf Design}
\newtheorem{module}{\bf Module}
\newtheorem{circuit}{\bf Circuit}

\newtheorem{thm}{Corollary}[section]
\renewcommand{\thethm}{\arabic{section}.\arabic{thm}}

\def\qed{$\Box$}
\def\QED{\mbox{\phantom{m}}\nolinebreak\hfill$\,\Box$}
\def\proof{\noindent{\emph{Proof:} }}
\def\poof{\noindent{\emph{Sketch of Proof:} }}
\def
\endproof{\hspace*{\fill}~\qed
\par
\endtrivlist\unskip}
\def\endproof{\hspace*{\fill}~\qed\par\endtrivlist\vskip3pt}

\def\E{\mathsf{E}}
\def\eps{\varepsilon}
\def\phi{\varphi}
\def\Lsp{{\boldsymbol L}}
\def\Bsp{{\boldsymbol B}}
\def\lsp{{\boldsymbol\ell}}
\def\Ltsp{{\Lsp^2}}
\def\Lpsp{{\Lsp^p}}
\def\Linsp{{\Lsp^{\infty}}}
\def\LtR{{\Lsp^2(\Rst)}}
\def\ltZ{{\lsp^2(\Zst)}}
\def\ltsp{{\lsp^2}}
\def\ltZt{{\lsp^2(\Zst^{2})}}
\def\ninN{{n{\in}\Nst}}
\def\oh{{\frac{1}{2}}}
\def\grass{{\cal G}}
\def\ord{{\cal O}}
\def\dist{{d_G}}
\def\conj#1{{\overline#1}}
\def\ntoinf{{n \rightarrow \infty}}
\def\toinf{{\rightarrow \infty}}
\def\tozero{{\rightarrow 0}}
\def\trace{{\operatorname{trace}}}
\def\ord{{\cal O}}
\def\UU{{\cal U}}
\def\rank{{\operatorname{rank}}}
\def\acos{{\operatorname{acos}}}

\def\SINR{\mathsf{SINR}}
\def\SNR{\mathsf{SNR}}
\def\SIR{\mathsf{SIR}}
\def\tSIR{\widetilde{\mathsf{SIR}}}
\def\Ei{\mathsf{Ei}}
\def\l{\left}
\def\r{\right}
\def\lb{\left\{}
\def\rb{\right\}}

\def\pb{\pmb{[}}
\def\pk{\pmb{]}}

\setcounter{page}{1}

\newcommand{\eref}[1]{(\ref{#1})}
\newcommand{\fig}[1]{Fig.\ \ref{#1}}

\def\bydef{:=}
\def\ba{{\mathbf{a}}}
\def\bb{{\mathbf{b}}}
\def\bc{{\mathbf{c}}}
\def\bd{{\mathbf{d}}}
\def\bee{{\mathbf{e}}}
\def\bff{{\mathbf{f}}}
\def\bg{{\mathbf{g}}}
\def\bh{{\mathbf{h}}}
\def\bi{{\mathbf{i}}}
\def\bj{{\mathbf{j}}}
\def\bk{{\mathbf{k}}}
\def\bl{{\mathbf{l}}}
\def\bm{{\mathbf{m}}}
\def\bn{{\mathbf{n}}}
\def\bo{{\mathbf{o}}}
\def\bp{{\mathbf{p}}}
\def\bq{{\mathbf{q}}}
\def\br{{\mathbf{r}}}
\def\bs{{\mathbf{s}}}
\def\bt{{\mathbf{t}}}
\def\bu{{\mathbf{u}}}
\def\bv{{\mathbf{v}}}
\def\bw{{\mathbf{w}}}
\def\bx{{\mathbf{x}}}
\def\by{{\mathbf{y}}}
\def\bz{{\mathbf{z}}}
\def\b0{{\mathbf{0}}}

\def\bA{{\mathbf{A}}}
\def\bB{{\mathbf{B}}}
\def\bC{{\mathbf{C}}}
\def\bD{{\mathbf{D}}}
\def\bE{{\mathbf{E}}}
\def\bF{{\mathbf{F}}}
\def\bG{{\mathbf{G}}}
\def\bH{{\mathbf{H}}}
\def\bI{{\mathbf{I}}}
\def\bJ{{\mathbf{J}}}
\def\bK{{\mathbf{K}}}
\def\bL{{\mathbf{L}}}
\def\bM{{\mathbf{M}}}
\def\bN{{\mathbf{N}}}
\def\bO{{\mathbf{O}}}
\def\bP{{\mathbf{P}}}
\def\bQ{{\mathbf{Q}}}
\def\bR{{\mathbf{R}}}
\def\bS{{\mathbf{S}}}
\def\bT{{\mathbf{T}}}
\def\bU{{\mathbf{U}}}
\def\bV{{\mathbf{V}}}
\def\bW{{\mathbf{W}}}
\def\bX{{\mathbf{X}}}
\def\bY{{\mathbf{Y}}}
\def\bZ{{\mathbf{Z}}}

\def\bxi{{\boldsymbol{\xi}}}

\def\sT{{\mathsf{T}}}
\def\sH{{\mathsf{H}}}
\def\cmp{{\text{cmp}}}
\def\cmm{{\text{cmm}}}
\def\WPT{{\text{WPT}}}
\def\lo{{\text{lo}}}
\def\gl{{\text{gl}}}

\def\tT{{\widetilde{T}}}
\def\tF{{\widetilde{F}}}
\def\tP{{\widetilde{P}}}
\def\tG{{\widetilde{G}}}
\def\tbh{{\widetilde{\mathbf{h}}}}
\def\tbg{{\widetilde{\mathbf{g}}}}

\def\mA{{\mathbb{A}}}
\def\mB{{\mathbb{B}}}
\def\mC{{\mathbb{C}}}
\def\mD{{\mathbb{D}}}
\def\mE{{\mathbb{E}}}
\def\mF{{\mathbb{F}}}
\def\mG{{\mathbb{G}}}
\def\mH{{\mathbb{H}}}
\def\mI{{\mathbb{I}}}
\def\mJ{{\mathbb{J}}}
\def\mK{{\mathbb{K}}}
\def\mL{{\mathbb{L}}}
\def\mM{{\mathbb{M}}}
\def\mN{{\mathbb{N}}}
\def\mO{{\mathbb{O}}}
\def\mP{{\mathbb{P}}}
\def\mQ{{\mathbb{Q}}}
\def\mR{{\mathbb{R}}}
\def\mS{{\mathbb{S}}}
\def\mT{{\mathbb{T}}}
\def\mU{{\mathbb{U}}}
\def\mV{{\mathbb{V}}}
\def\mW{{\mathbb{W}}}
\def\mX{{\mathbb{X}}}
\def\mY{{\mathbb{Y}}}
\def\mZ{{\mathbb{Z}}}

\def\cA{\mathcal{A}}
\def\cB{\mathcal{B}}
\def\cC{\mathcal{C}}
\def\cD{\mathcal{D}}
\def\cE{\mathcal{E}}
\def\cF{\mathcal{F}}
\def\cG{\mathcal{G}}
\def\cH{\mathcal{H}}
\def\cI{\mathcal{I}}
\def\cJ{\mathcal{J}}
\def\cK{\mathcal{K}}
\def\cL{\mathcal{L}}
\def\cM{\mathcal{M}}
\def\cN{\mathcal{N}}
\def\cO{\mathcal{O}}
\def\cP{\mathcal{P}}
\def\cQ{\mathcal{Q}}
\def\cR{\mathcal{R}}
\def\cS{\mathcal{S}}
\def\cT{\mathcal{T}}
\def\cU{\mathcal{U}}
\def\cV{\mathcal{V}}
\def\cW{\mathcal{W}}
\def\cX{\mathcal{X}}
\def\cY{\mathcal{Y}}
\def\cZ{\mathcal{Z}}
\def\cd{\mathcal{d}}
\def\Mt{M_{t}}
\def\Mr{M_{r}}
\def\O{\Omega_{M_{t}}}
\newcommand{\figref}[1]{{Fig.}~\ref{#1}}
\newcommand{\tabref}[1]{{Table}~\ref{#1}}

\newcommand{\var}{\mathsf{Var}}
\newcommand{\fb}{\tx{fb}}
\newcommand{\nf}{\tx{nf}}
\newcommand{\BC}{\tx{(bc)}}
\newcommand{\MAC}{\tx{(mac)}}
\newcommand{\Pout}{p_{\mathsf{out}}}
\newcommand{\nnn}{\nn\\}
\newcommand{\FB}{\tx{FB}}
\newcommand{\TX}{\tx{TX}}
\newcommand{\RX}{\tx{RX}}
\renewcommand{\mod}{\tx{mod}}
\newcommand{\m}[1]{\mathbf{#1}}
\newcommand{\td}[1]{\tilde{#1}}
\newcommand{\sbf}[1]{\scriptsize{\textbf{#1}}}
\newcommand{\stxt}[1]{\scriptsize{\textrm{#1}}}
\newcommand{\suml}[2]{\sum\limits_{#1}^{#2}}
\newcommand{\sumlk}{\sum\limits_{k=0}^{K-1}}
\newcommand{\eqhsp}{\hspace{10 pt}}
\newcommand{\tx}[1]{\texttt{#1}}
\newcommand{\Hz}{\ \tx{Hz}}
\newcommand{\sinc}{\tx{sinc}}
\newcommand{\tr}{\mathsf{tr}}
\newcommand{\diag}{\mathrm{diag}}
\newcommand{\MAI}{\tx{MAI}}
\newcommand{\ISI}{\tx{ISI}}
\newcommand{\IBI}{\tx{IBI}}
\newcommand{\CN}{\tx{CN}}
\newcommand{\CP}{\tx{CP}}
\newcommand{\ZP}{\tx{ZP}}
\newcommand{\ZF}{\tx{ZF}}
\newcommand{\SP}{\tx{SP}}
\newcommand{\MMSE}{\tx{MMSE}}
\newcommand{\MINF}{\tx{MINF}}
\newcommand{\RC}{\tx{MP}}
\newcommand{\MBER}{\tx{MBER}}
\newcommand{\MSNR}{\tx{MSNR}}
\newcommand{\MCAP}{\tx{MCAP}}
\newcommand{\vol}{\tx{vol}}
\newcommand{\ah}{\hat{g}}
\newcommand{\tg}{\tilde{g}}
\newcommand{\teta}{\tilde{\eta}}
\newcommand{\heta}{\hat{\eta}}
\newcommand{\uh}{\m{\hat{s}}}
\newcommand{\eh}{\m{\hat{\eta}}}
\newcommand{\hv}{\m{h}}
\newcommand{\hh}{\m{\hat{h}}}
\newcommand{\Po}{P_{\mathrm{out}}}
\newcommand{\Poh}{\hat{P}_{\mathrm{out}}}
\newcommand{\Ph}{\hat{\gamma}}
\newcommand{\mat}[1]{\begin{matrix}#1\end{matrix}}
\newcommand{\ud}{^{\dagger}}
\newcommand{\C}{\mathcal{C}}
\newcommand{\nn}{\nonumber}
\newcommand{\nInf}{U\rightarrow \infty}

\title{Realizing In-Memory Baseband Processing for Ultra-Fast and Energy-Efficient 6G}
\author{
Qunsong Zeng, Jiawei Liu, Mingrui Jiang, Jun Lan, Yi Gong, Zhongrui Wang, Yida Li, Can Li, Jim Ignowski, and Kaibin Huang
\thanks{Q. Zeng, J. Liu, M. Jiang, Z. Wang, C. Li, and K. Huang are with The University of Hong Kong, Hong Kong SAR. J. Liu, J. Lan, Y. Gong, and Y. Li are with Southern University of Science and Technology, Shenzhen, China. J. Ignowski is with Hewlett Packard Enterprise, United States. Q. Zeng and J. Liu contributed equally to this work. Contact: K. Huang (huangkb@eee.hku.hk).}
}

\maketitle

\begin{abstract}
    To support emerging applications ranging from holographic communications to extended reality, next-generation mobile wireless communication systems require ultra-fast and energy-efficient baseband processors. Traditional complementary metal-oxide-semiconductor (CMOS)-based baseband processors face two challenges in transistor scaling and the von Neumann bottleneck. To address these challenges, in-memory computing-based baseband processors using resistive random-access memory (RRAM) present an attractive solution. In this paper, we propose and demonstrate RRAM-implemented in-memory baseband processing for the widely adopted multiple-input-multiple-output orthogonal frequency division multiplexing (MIMO-OFDM) air interface. Its key feature is to execute the key operations, including discrete Fourier transform (DFT) and MIMO detection using linear minimum mean square error (L-MMSE) and zero forcing (ZF), in one-step. In addition, RRAM-based channel estimation module is proposed and discussed. By prototyping and simulations, we demonstrate the feasibility of RRAM-based full-fledged communication system in hardware, and reveal it can outperform state-of-the-art baseband processors with a gain of $91.2\times$ in latency and $671\times$ in energy efficiency by large-scale simulations. Our results pave a potential pathway for RRAM-based in-memory computing to be implemented in the era of the sixth generation (6G) mobile communications.
\end{abstract}
\begin{IEEEkeywords}
In memory computing, baseband processing, resistive switching memory, 6G communications, MIMO-OFDM.
\end{IEEEkeywords}

\section{Introduction}
While the fifth generation (5G) mobile networks are being deployed, the sixth generation (6G) is under development all over the world to provide a new infrastructure for propelling the digital economy forward and realizing Society 5.0~\cite{ghosh2021ai}. The performance of 6G will be unprecedented as reflected in a set of target key performance indicators (KPIs), dictating a peak data rate to go beyond 100Gb/s, having a minimum latency 0.1ms, and achieving an energy efficiency of $10^{-12}$J/bit~\cite{dang2020should,akhtar2020shift,tataria20216g,vaezi2022cellular,rajatheva2020white}. This coined the term \emph{ultra-fast-and-energy-efficient} (UFEE) communication and will enable a wide range of emerging applications, for example, industrial automation~\cite{sodhro2020toward,popovski2019wireless}, tactile internet~\cite{holland2019ieee,promwongsa2020comprehensive,steinbach2018haptic}, holographic communications~\cite{clemm2020toward,wakunami2016projection}, and digital twin~\cite{nguyen2021digital,shen2021holistic}. Hence, this provides a strong motivation for 6G researchers to explore the largely unoccupied Terahertz (THz) spectrum~\cite{dang2020should,akhtar2020shift,tataria20216g,vaezi2022cellular,rajatheva2020white}. However, the required scaling up of baseband data rates to the hundreds of Gbps level will dramatically increase the power consumption and complexity of baseband processing, making it challenging to realize the 6G vision~\cite{sarieddeen2021overview,skrimponis2020power,rikkinen2020thz}. This is further exacerbated by the increasingly sophisticated communication techniques required, including large-scale \emph{multiple-input multiple-output} (MIMO), high-dimensional \emph{orthogonal frequency division multiplexing} (OFDM), and interference management. From 2G to 5G era, baseband processing demands have been satisfied largely by shrinking transistor size as governed by Moore’s law. Accordingly, the semiconductor industry has evolved from planar bulk \emph{Metal-Oxide-Semiconductor Field-Effect Transistors} (MOSFETs) to the recent 3D FinFETs and \emph{Gate All Around} (GAA) architectures to improve transistor performance and density in an integrated circuit (IC) chip~\cite{orji2018metrology}. However, this approach is facing increasing challenges as transistor size approaches the atomic limit~\cite{leiserson2020there}. In view of the Moore’s Law coming to an end, we propose the new paradigm called in-memory baseband processing for the post-Moore era, which adopts the emerging in-memory computing architecture instead of relying on transistor densification, to pave the way towards realizing the 6G UFEE connectivity.

Baseband processing and computing at large face two bottlenecks: the von Neumann bottleneck and the power wall, incurring large energy and footprint overheads. The former is due to data shuttling between the physically separated processing and storage units, resulting in significant latency and high energy consumption (e.g., 100-time more than digital logical circuits). In the latter, the increasing power density of transistors as the transistor size shrinks has created a “power wall” that limits practical processor frequency to $\sim$4 GHz since 2006~\cite{van2018method}, falling far short of the requirements for THz communications. 
In the past decade, researchers have started to improve computing latency and energy consumption by employing an architecture that co-locates data processing and storage, so called in-memory computing. 
Rather than making incremental improvements to conventional systems such as parallelism or memory bandwidth, in-memory computing takes a different approach by performing calculations where the data is located, thus fundamentally changing the von Neumann architecture~\cite{di2013parallel}. This method is similar to the way the human brain processes information in the networks of neurons and synapses, where there is no separation between computation and memory~\cite{indiveri2015memory}. In contrast to traditional computing schemes, in-memory computing eliminates latency and energy usage issues associated with the memory wall. However, this new architecture requires computational memory devices that can both store data and perform calculations simultaneously, usually by leveraging physical laws like Ohm's and Kirchhoff's laws in electrical circuits~\cite{ielmini2018memory}.
Emerging non-volatile memories such as \emph{resistive random-access memory} (RRAM) is touted as one of the most potential candidates for such computational memory devices~\cite{wang2020resistive}. It has been reported that parallel execution of a larger number (e.g., millions) of multiply-and-accumulate (MAC) operations for matrix vector multiplications (MVM) can be accomplished with extremely high energy-efficiency and low latency~\cite{sebastian2020memory}. 
This makes in-memory computing a UFEE solution for MVM intensive applications such as deep neural networks~\cite{hu2014memristor,ambrogio2018equivalent,burr2015experimental,yan2019rram,yao2020fully,zhang2021hybrid,wozniak2020deep,zhang2020artificial,duan2020spiking,li2020power,prezioso2015training,chen2019cmos,mahmoodi2019versatile,cai2019fully,li2018efficient,yao2017face} and linear algebra computation~\cite{sun2019solving,sun2020one,le2018mixed,zidan2018general}. Such advantages can naturally contribute to the trend of seamless integration of communication and artificial intelligence (AI) for the next-generation Internet-of-Things (IoTs). A new paradigm for communications called in-memory baseband processing, which adopts the emerging in-memory computing architecture and novel signal processing approach, are potential key factors to alleviate the challenges faced by researchers in realizing UFEE connectivity in the era of 6G.

\begin{figure*}[t!]
    \centering
    \includegraphics[width=0.85\textwidth]{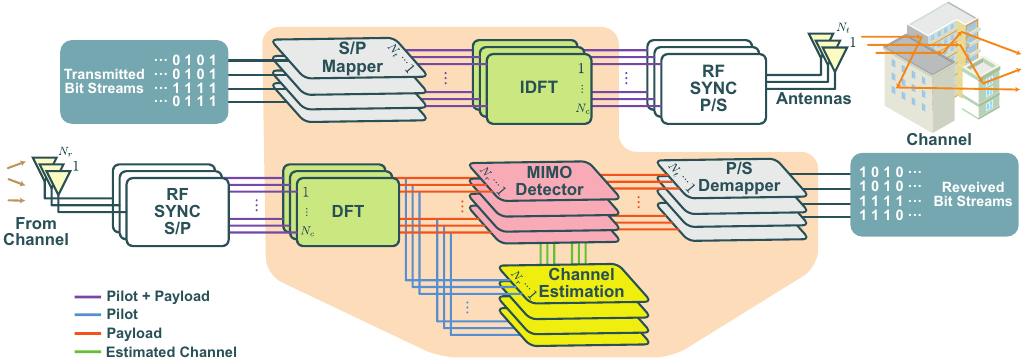}
    \caption{The architecture of RRAM-based transmitter: It consists of baseband processing modules [i.e., mapper and IDFT], RF modem, and an array of transmit antennas. Each layer represents a piece of RRAM-based circuit. The architecture of RRAM-based receiver: It is comprised of an array of receive antennas, RF front-end, and baseband processing modules [i.e., DFT, channel estimation, MIMO detection, and demapper].}
    \label{Fig.1}
\end{figure*}
6G will feature scaling up of different physical-layer technologies, for example, massive MIMO using large-scale antenna arrays~\cite{zhu2017hybrid} and OFDM comprising thousands of sub-carriers~\cite{tataria20216g}. The resultant baseband processing will involve frequent large-scale matrix operations. This motivates us to propose the new paradigm of in-memory baseband processing, which relocates the conventional digital operations to the analogue domain to achieve UFEE processing. In this paper, we present the design of an in-memory baseband processor for MIMO-OFDM which is a dominant air-interface technology for 5G-and-beyond~\cite{dang2020should,akhtar2020shift,tataria20216g,vaezi2022cellular,he2021cell}. The key novelty includes modules, namely OFDM demodulation, MIMO detection, and channel estimation, which are designed and implemented using in-memory computing approach based on Ta/TaO$_x$/Pt RRAM chip. The OFDM module implements the \emph{discrete Fourier transform} (DFT) using two RRAM crossbar arrays. Using such arrays to store DFT matrix enables one-step DFT operation, cutting down the power/latency overheads in conventional CMOS-based processor significantly. Furthermore, the required channel matrix inversion for MIMO detection is realized using a novel RRAM circuit featuring stability and easy mode switching, enabling the one-step operation. The performance of our design is evaluated using proof-of-concept prototypes for separate modules and a complete system by physical experiments, respectively, and simulations for a large-scale communication system. We show that the throughput and energy-efficiency can be boosted up to $91.2\times$ and $671\times$ respectively as compared to state-of-the-art CMOS-based baseband processors. 



\section{Overview of RRAM-based Baseband Processor}
In this paper, the baseband processor targets the MIMO-OFDM air interface, where a pair of multi-antenna transmitter and receiver communicate over a broadband channel. In broadband communications, frequency selective fading occurs when the channel having a coherence bandwidth is smaller than that of the signal causes its distortion. As a popular technology for coping with such fading as well as inter-symbol interference, OFDM is adopted to divide the whole bandwidth into $N_c$ orthogonal sub-channels. As a result, each sub-channel, say the $k$-th sub-channel, is a narrowband channel with $N_t$ transmit and $N_r$ receive antennas, modelled by a MIMO-channel matrix $\mathbf{H}^{(k)}\in\mathbb{C}^{N_r\times N_t}$ that is fixed within an OFDM symbol. The input-output relation of a MIMO system over the $k$-th sub-channel is given as 
\begin{equation}
    \mathbf{y}^{(k)}=\mathbf{H}^{(k)}\mathbf{x}^{(k)}+\mathbf{z}^{(k)},
\end{equation}
where $\mathbf{x}^{(k)}\in\mathbb{C}^{N_t\times1}$ consists of symbols at the $k$-th sub-carrier, $\mathbf{y}^{(k)}\in\mathbb{C}^{N_r\times1}$ comprises the received symbols at the $k$-th sub-carrier, and $\mathbf{z}^{(k)}$ represents the \emph{additive white Gaussian noise} (AWGN) in propagation.

The architectures of the RRAM-based transceiver are illustrated in Fig.~\ref{Fig.1}. Before baseband processing, the receiver still needs sampling at the RF front-end. Compared with a fast analogue-to-digital converter (ADC) design for traditional digital baseband processing, the proposed baseband processor features the direct processing of analogue-valued input signals so that the ADC can be replaced with a simpler sample-and-hold circuit. The baseband (information) processing starts at the mapper module in the transmitter that transforms bits into symbols and ends at the demapper module in the receiver that transforms the symbols back to bits. 
The digital modulation is chosen as 16 quadrature amplitude modulation (16-QAM) unless specified otherwise, which maps a 4-bit string to one of the 16 points on the constellation diagram. The bit stream is split into in-phase (denoted by $I$) and quadrature (denoted by $Q$) streams, associated with 0-degree and 90-degree phase shifts of the carrier wave, respectively. $I$ and $Q$ components are Gray encoded, i.e., neighbour points only differ in a single bit, to produce symbol points in the constellation. The system performance is evaluated by two metrics: i) The \emph{modulation error ratio} (MER) measures the dispersion of the constellation of the received symbols. To be specific, given total $M$ transmitted symbols, the definition of MER is 
\begin{equation}  
\text{MER}=10\log_{10}\left(\frac{\sum\limits_{m=1}^M\left(I_m^2+Q_m^2\right)}{\sum\limits_{m=1}^M\left[(I_m'-I_m)^2+(Q_m'-Q_m)^2\right]}\right)\text{dB},
\end{equation}
where $I_m$ and $Q_m$ denote the in-phase and quadrature components of the $m$-th transmitted symbol while $I_m'$ and $Q_m'$ denote the in-phase and quadrature components of the received symbol.
ii) The \emph{bit error ratio} (BER) is the number of bit errors divided by the total number of transmitted bits. To be specific, during the studied time interval, the BER is given by 
\begin{equation}
    \text{BER}=\frac{\rm \#~error~bits}{\rm \#~total~transmitted~bits}\times 100\%.
\end{equation}
In this work, we focus on the baseband processing between the mapper and demapper. The module in the transmitter performs \emph{inverse DFT} (IDFT). For the receiver, the three modules are DFT module, channel estimator, and MIMO detector. To reconcile signals and channels in the complex domain and the fact that RRAM devices store and compute real numbers, we propose to apply the mapping $\mathcal{R}:~\mathbb{C}^{K\times L}\to\mathbb{C}^{2K\times 2L}$ which transforms a complex matrix $\mathbf{A}\in\mathbb{C}^{K\times L}$ into a real matrix
    $\mathcal{R}(\mathbf{A})=\left[\begin{matrix}\Re(\mathbf{A})&-\Im(\mathbf{A})\\\Im(\mathbf{A})&\Re(\mathbf{A})\end{matrix}\right]\in\mathbb{R}^{2K\times 2L}$.
The complex vector is translated as the input voltages (or currents) for the RRAM array, with the mapping $\mathcal{T}:~\mathbb{C}^{K\times1}\to\mathbb{R}^{2K\times1}$ transforming a complex vector $\mathbf{x}\in\mathbb{C}^{K\times1}$ into a real vector
    $\mathcal{T}(\mathbf{x})=\left[\begin{matrix}\Re(\mathbf{x})\\\Im(\mathbf{x})\end{matrix}\right]\in\mathbb{R}^{2K\times1}$.
Such transformations enables the equivalent computation involving complex matrices and vectors.

\begin{figure}[t!]
    \centering
    \includegraphics[width=0.95\columnwidth]{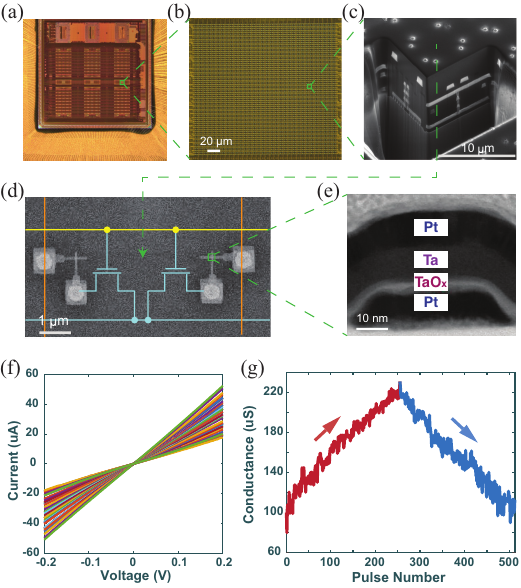}
    \caption{Integrated RRAM chip and measurements. (a) A photo of a wire-bounded integrated RRAM chip, which contains three 64 $\times$ 64 1T1R crossbar arrays, row MUXes, column MUXes, transimpedance amplifiers (TIA), sample-and-hold, and ADC. (b) Optical image of a 64 $\times$ 64 RRAM crossbar array. (c) The cross-section view of the integrated chip with CMOS circuits at the bottom, inter-connection in the middle, and metal through-hole on the surface used for RRAM and back-end process integration. (d) Top view of four cross-point RRAM devices. (e) The TEM image of the RRAM device. (f) Ohmic behaviour of RRAM devices. The linear I-V relationship is illustrated at different conductance states under different read voltages (-0.2$\sim$0.2V). (g) The conductance modulation characteristic of the RRAM device. A train of voltage pulses (pulse width 10ns) are applied for the RRAM conductance modulation measurements. The magnitude of voltages starts at 0.60V and grows to 0.70V smoothly for potentiation, while it starts at -0.50V and drops to -0.65V gradually for depression. The cycle-to-cycle variations are $4.41\%$ during potentiation and $5.44\%$ during depression, respectively. The conductance ranges from 79.93$\mu$S to 230.99$\mu$S in the behavioral measurement.}
    \label{Fig.2}
\end{figure}

We use the Ta/TaO$_x$/Pt-based RRAM arrays as the hardware accelerators for its compatibility with traditional CMOS process and reliable electrical characteristics. Details of the RRAM array fabrication and integration are described in Appendix~\ref{Appendix: RRAM device fabrication}. The wire-bonded integrated RRAM chip that we used to implement the baseband processor modules is shown in Fig.~\ref{Fig.2}(a), which contains three $64 \times 64$ RRAM crossbar arrays and one of them is shown in Fig.~\ref{Fig.2}(b). The 50nm $\times$ 50nm Ta/TaO$_x$/Pt RRAMs are integrated with back-end-of-the-line (BEOL) processing on top of the control peripheral circuits (see Fig.~\ref{Fig.2}(c)). The peripheral control circuits are implemented with a commercial 180nm technology integrated chip, among which the access transistors are highlighted in Fig.~\ref{Fig.2}(d). Such one-transistor-one-resistor (1T1R) array architecture avoids the sneak current issue during RRAMs’ conductance programming and allows each cell in the array to be accessed independently \cite{yu2016emerging}. The cross-sectional transmission electron microscopy (TEM) image of the RRAM device is shown in Fig.~\ref{Fig.2}(e). As a non-volatile analogue device, our RRAM device exhibits linear Ohmic behaviour (see Fig.~\ref{Fig.2}(f)) to ensure accurate in-memory computing. For matrix mapping, the conductance programming of the fabricated RRAM device can be achieved by applying a train of positive pulses (0.60$\sim$0.70V/10ns) for potentiation, and continuous negative pulses (-0.50$\sim$-0.65V/10ns) for depression (see Fig.~\ref{Fig.2}(g)). 


\begin{figure}[t!]
    \centering
    \subfigure[DFT module design]{
    \includegraphics[width=0.98\columnwidth]{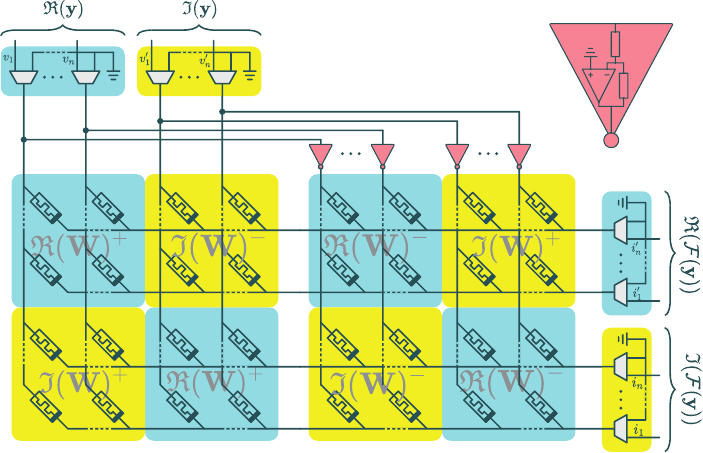}}
    \subfigure[Real mapped DFT matrix ($\mu$S)]{
    \includegraphics[width=0.45\columnwidth]{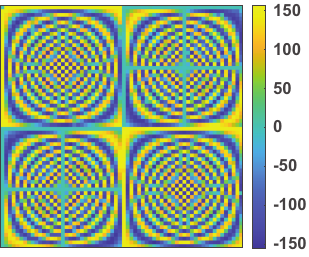}}
    \subfigure[Error matrix]{
    \includegraphics[width=0.45\columnwidth]{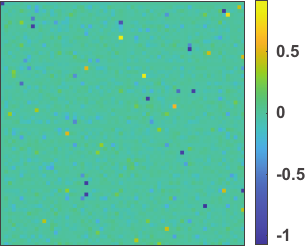}}
    \caption{Orthogonal frequency-division multiplexing modules. (a) The architecture of RRAM-based DFT module. The DFT matrix $\mathbf{W}$ is stored in the RRAM array and the input signal $\mathbf{y}$ is translated as the voltages to be applied to the array. The elements (and signals) in real and imaginary domains are highlighted by different colours. The module performs the DFT over signal $\mathbf{y}$ and the read drivers get the result $\mathcal{F}(\mathbf{y})=\mathbf{W}\mathbf{y}$. The design of inverting amplifier is presented at the upper right corner. (b, c) In the experiment, the real mapping of the DFT matrix is scaled and programmed into two 64 $\times$ 64 RRAM arrays in our integrated RRAM chip: (b) conductance matrix and (c) corresponding error matrix, each element of which refers to the ratio (experimental conductance – target conductance)/target conductance [note: the value is not in percentage form].}
    \label{Fig.3}
\end{figure}


\section{Orthogonal Frequency-Division Multiplexing Module}
The RRAM-based DFT module is illustrated in Fig.~\ref{Fig.3}(a), where data are modulated onto non-interfering sub-carriers in the frequency domain. The transformation between the time and frequency domains involves IDFT/DFT operations. For the received block of symbols $\mathbf{y}$, the DFT of which can be represented as an $N_c$-length vector: $\mathcal{F}(\mathbf{y})=\mathbf{W}\mathbf{y}$, where $\mathcal{F}(\cdot)$ denotes the DFT operation and $\mathbf{W}$ is the DFT matrix. In the circuit design, the real mapping of DFT matrix, $\mathcal{R}(\mathbf{W})$, is scaled into the RRAM devices’ conductance range and stored as the difference between two arrays. The received signal $\mathbf{y}$ is translated to the input voltages $\mathcal{T}(\mathbf{y})$ for the array. The module computes the DFT of $\mathbf{y}$, and the current outputs are the scaled real vector mapping $\mathcal{T}(\mathcal{F}(\mathbf{y}))$. The detailed discussion on the hardware implementation of this module is provided in Appendix~\ref{Appendix: DFT/IDFT and channel estimator circuit}. Compared with conventional approaches based on \emph{fast Fourier transform} (FFT) algorithms~\cite{cochran1967fast}, the RRAM-based design features the dramatic reduction of computational complexity of from $O(N_c\log N_c)$ for FFT to just a one-step (i.e., $O(1)$) operation. This makes it possible to overcome the bottleneck of high complexity of DFT in baseband processing for the next-generation large-scale OFDM communications.

\begin{figure}[t!]
    \centering\hspace{-3mm}
    \subfigure[Digital processor]{
    \includegraphics[width=0.33\columnwidth]{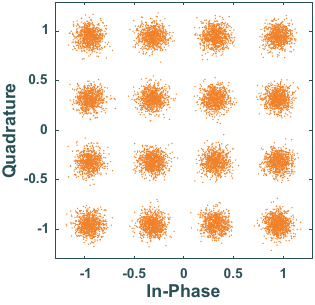}}\hspace{-2mm}
    \subfigure[RRAM processor]{
    \includegraphics[width=0.33\columnwidth]{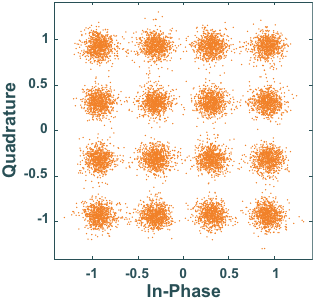}}\hspace{-2mm}
    \subfigure[Defection corrected]{
    \includegraphics[width=0.33\columnwidth]{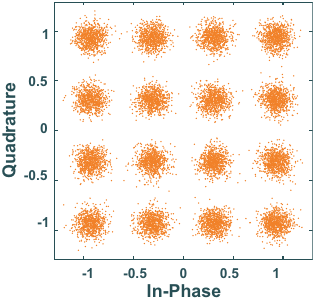}}
    \subfigure[BER performance]{
    \includegraphics[width=0.45\columnwidth]{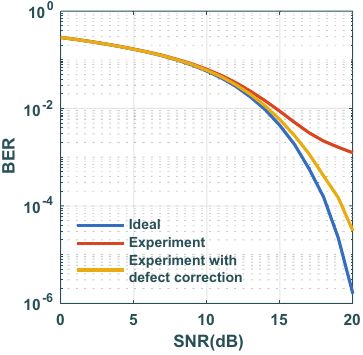}}
    \subfigure[MER performance]{
    \includegraphics[width=0.45\columnwidth]{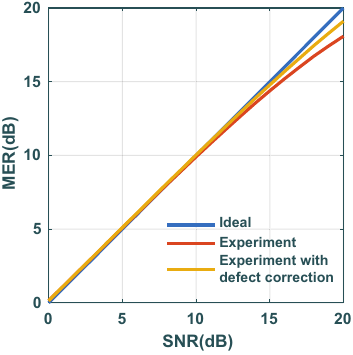}}
    \caption{Experimental performance of the DFT module in the demonstrated OFDM system. At the receiver, the constellation diagram is recovered using (a) digital processor, (b) the RRAM processor experimental results, (c) RRAM processor results compensated by defective correction. (d, e) Communication performance of digital processor, RRAM-implemented DFT and defection corrected RRAM-DFT under different channel conditions.}
    \label{Fig.4}
\end{figure}

In this section, a single-antenna OFDM system with 32 sub-carriers is demonstrated. The conductance mapping of the DFT matrix to RRAM array is scaled to fit the RRAM devices’ conductance range, which are programmed into two arrays. The subtraction of the conductance matrices of these two arrays, in the form of differential pairs, and the corresponding error matrix are presented in Fig.~\ref{Fig.3}(b) and Fig.~\ref{Fig.3}(c). The complete signal processing path in the prototypical RRAM-based OFDM system is described as follows. For the transmitter, a message in bits is firstly modulated into 16-QAM symbols, and then transformed from the frequency domain into the time domain by IDFT. After adding a cyclic prefix, the OFDM symbols are transmitted over the channel towards the receiver. At the receiver, after removing the cyclic prefix, the RRAM-based DFT is performed to transform the received signal back to the frequency domain, where the symbols are then demodulated into bits to recover the message. The performance of the receiver with RRAM-based DFT module is experimentally characterized over the wireless channel of receive signal-noise-ratio (SNR) being 20dB. As a benchmark, the constellation diagram recovered by the digital processor using a double-precision floating-point DFT matrix is shown in Fig.~\ref{Fig.4}(a). In this case, the distortion of demodulated symbols is measured by MER 20dB at which no bit errors occur. For our RRAM-implemented DFT prototype, the measured constellation diagram is shown in Fig.~\ref{Fig.4}(b) with MER dropping to 18dB while the BER growing to 0.00146. The communication performance is affected by both the channel noise and RRAM devices’ imperfections. Compared with the results from a digital processor, the performance loss of our experimental RRAM-implemented DFT module comes from the imperfections of RRAM devices in the array, including defections and programming errors. To reveal the effect of the defective devices, we compensate the defections by post-processing. To be specific, we define the defection matrix $\mathbf{W}_{\rm defection}$ as the compensatory conductance matrix of stuck-on and stuck-off RRAM devices. We then perform the multiplication operation $\mathbf{W}_{\rm defection} \mathbf{y}$ in computer and add the result into the experimental outcome to obtain the defection-corrected result: $\mathbf{x}=\mathbf{W}_{\rm RRAM} \mathbf{y}+\mathbf{W}_{\rm defection} \mathbf{y}$. Leveraging this method, we rectify the experimental constellation diagram from RRAM-implemented DFT module as shown in Fig.~\ref{Fig.4}(c), whose BER is ameliorated by an order of magnitude. Moreover, we explore the performance of our design with different transmission powers (i.e., different SNRs) as shown in Fig.~\ref{Fig.4}(d) and Fig.~\ref{Fig.4}(e). We observe that the performance differences between digital processor and our experimental RRAM-implemented DFT module are insignificant for a noisy channel. However, the differences can be noticeable for cleaner channels where the imperfections of RRAM devices in the array deteriorate the communication performance. The defective devices play a destructive role in the baseband processing and tackling this issue brings benefit to the enhancement of performance.

\begin{figure}[t!]
    \centering
    \subfigure[MIMO module design]{
    \includegraphics[width=0.99\columnwidth]{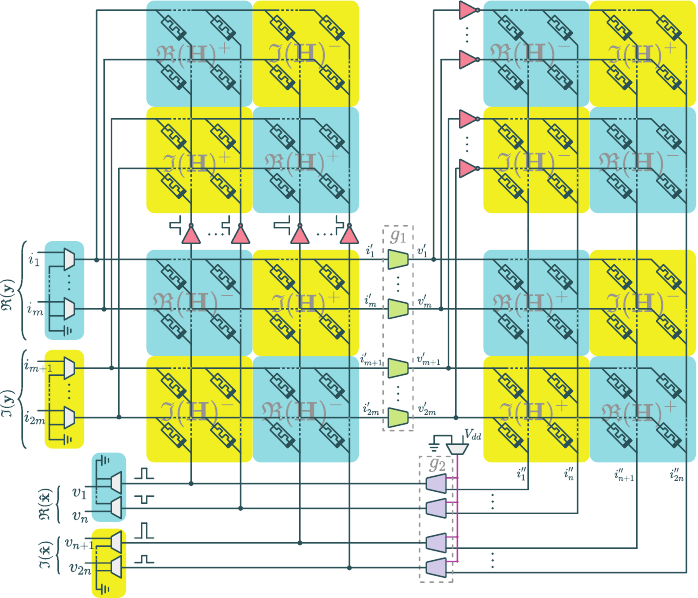}}
    \subfigure[TIAs]{
    \includegraphics[height=3cm]{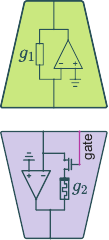}}
    \subfigure[Real mapped channel ($\mu$S)]{
    \includegraphics[height=3cm]{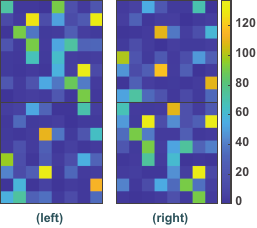}}
    \subfigure[Error matrix]{
    \includegraphics[height=3cm]{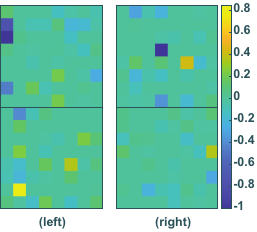}}
    \caption{Multiple-input and multiple-output modules. (a) The architecture of RRAM-based MIMO detection module. The channel matrix $\mathbf{H}$ is stored in the four RRAM arrays as marked in the figure and the input signal $\mathbf{y}$ is scaled and translated as the input currents. The elements (and signals) in real and imaginary domains are highlighted by different colours. (b) The transistor controls whether L-MMSE or ZF modules is adopted. When the gate is grounded, the circuit performs ZF detection. Otherwise, L-MMSE is selected. In addition, to adapt to environments with different SNRs, the feedback conductance of the operational amplifiers is tuneable. (c, d) In the experiment, the real mapped channel matrix is scaled and programmed into RRAM arrays in our integrated RRAM chip: (c) conductance matrix and (d) corresponding error matrix, each element of which refers to the ratio (experimental conductance – target conductance)/target conductance.}
    \label{Fig.5}
\end{figure}

\begin{figure}[t!]
    \centering\hspace{-3mm}
    \subfigure[Digital processor]{
    \includegraphics[width=0.33\columnwidth]{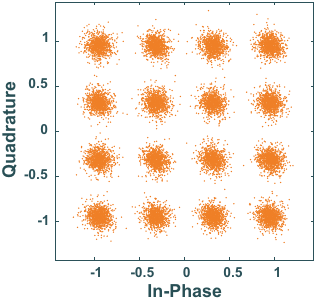}}\hspace{-2mm}
    \subfigure[RRAM processor]{
    \includegraphics[width=0.33\columnwidth]{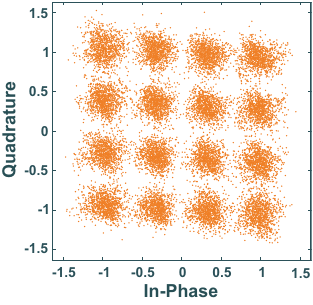}}\hspace{-2mm}
    \subfigure[Defection corrected]{
    \includegraphics[width=0.33\columnwidth]{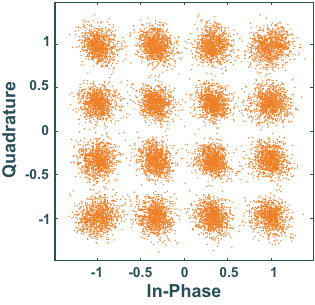}}
    \subfigure[BER performance]{
    \includegraphics[width=0.45\columnwidth]{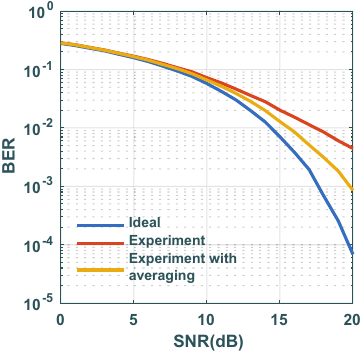}}
    \subfigure[MER performance]{
    \includegraphics[width=0.45\columnwidth]{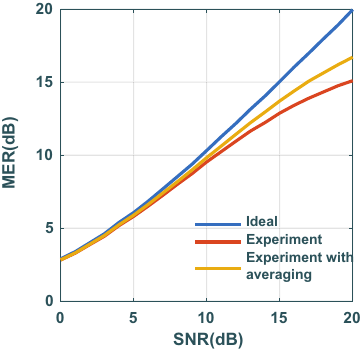}}
    \caption{Experimental performance of the MIMO detection module in the demonstrated MIMO system. At the receiver, the constellation diagram is recovered using (a) digital processor, (b) RRAM-implemented MIMO detection module, (c) RRAM MIMO detection module averaged from two implementations. (d, e) Communication performance of digital processor, RRAM-implemented MIMO detection, and averaged RRAM-pair implementation under different channel conditions.}
    \label{Fig.6}
\end{figure}

\section{Multiple-Input and Multiple-Output Detection Module}
The RRAM-based MIMO detection module is illustrated in Fig.~\ref{Fig.5}(a) and Fig.~\ref{Fig.5}(b), which spatially multiplexes multiple parallel data-streams. This scales up the system throughput since different symbols are simultaneously transmitted over different antennas. Exploiting the unique channel between each pair of transmit and receive antennas allows each transmitted symbol to be recovered through the module of MIMO detection. In practice, two linear detectors are widely used, namely \emph{linear minimum mean square error} (L-MMSE) and \emph{zero forcing} (ZF) detectors. They reverse the signal distortion by propagation through a MIMO channel by channel equalization. To be specific, given the channel matrix $\mathbf{H}$, the L-MMSE detector minimizes the mean squared error in the estimate of $\mathbf{x}$ among all linear detectors. The recovered signal vector is given by $\hat{\mathbf{x}}=\left(\mathbf{H}^{\sf H}\mathbf{H}+\frac{1}{\sf SNR}\mathbf{I}\right)^{-1}\mathbf{H}^{\sf H}\mathbf{y}$, where $\mathbf{y}$ is the received signal vector at the receiver. In hardware implementation, the equivalent real-value channel matrix $\mathcal{R}(\mathbf{H})$ is scaled and written into the RRAM arrays in the way as illustrated in Fig.~\ref{Fig.5}(a), and the received signal $\mathbf{y}$ is scaled and translated to the input voltages $\mathcal{T}(\mathbf{y})$. The output voltages are the real vector mapping $\mathcal{T}(\hat{\mathbf{x}})$, and the detailed analysis of this circuit is provided in Appendix~\ref{Appendix: L-MMSE/ZF MIMO Detector Circuit}. To cope with heterogeneous propagation environments with different SNRs, the feedback conductance of operational amplifiers can be represented using a RRAM device as shown in Fig.~\ref{Fig.5}(b). Our design also applies to ZF detection (see Appendix~\ref{Appendix: L-MMSE/ZF MIMO Detector Circuit}) which solves the least square problem and gives the recovered signal vector as $\hat{\mathbf{x}}=\left(\mathbf{H}^{\sf H}\mathbf{H}\right)^{-1}\mathbf{H}^{\sf H}\mathbf{y}$. As shown in Fig.~\ref{Fig.5}(b), the transistor dictates whether L-MMSE or ZF is applied. If the channel matrix is square, i.e., $N_t=N_r=N$, the computational complexity of conventional matrix inversion is $O(N^3)$. The complexity increases rapidly as the number of transmit/receive antennas grows. On the contrary, the proposed MIMO detection performs the computation in just a single step (i.e., $O(1)$), presenting a promising solution for efficient detection in the 6G massive MIMO communication.

We experimentally demonstrated the RRAM-based narrowband MIMO system with 4 transmit antennas and 4 receive antennas. The real mapped channel matrix $\mathcal{R}(\mathbf{H})$ is scaled and programmed into the RRAM arrays (see Fig.~\ref{Fig.5}(c)). The programming error is presented in Fig.~\ref{Fig.5}(d) . The experimental results of the constellation diagrams from L-MMSE detection for a noisy channel of SNR being 20dB. To begin with, the constellation diagram recovered by the digital processor is shown in Fig.~\ref{Fig.6}(a) as a benchmark. For our RRAM-implemented MIMO detection module, the measured MER drops 4dB compared to the digital counterpart, inducing more bit errors (see Fig.~\ref{Fig.6}(b)). The performance loss comes from the programming noise of RRAM devices whose effects on the circuit are twofold: i) the imprecision of the channel matrix representation; ii) the imbalance of the left and right channel matrices. To reduce the effect of the programming noise in RRAM devices, we use two RRAM devices to represent one real value which supresses the variance of random noise. By this means, we find the dispersion of constellation points (see Fig.~\ref{Fig.6}(c)) becomes less severe and the BER reduces by an order of magnitude. Furthermore, as shown in Fig.~\ref{Fig.6}(d) and Fig.~\ref{Fig.6}(e), we study the performance differences between digital processor and our RRAM-implemented MIMO detection module under different channel conditions. We observe subtle differences between them in the low SNR regime while the divergence becomes noticeable in the high SNR regime. The performance loss of our RRAM-implemented MIMO detection module can be relieved by representing channel matrix using more RRAM devices to reduce the programming noise. 

\begin{figure*}[t!]
    \centering
    \includegraphics[width=0.98\textwidth]{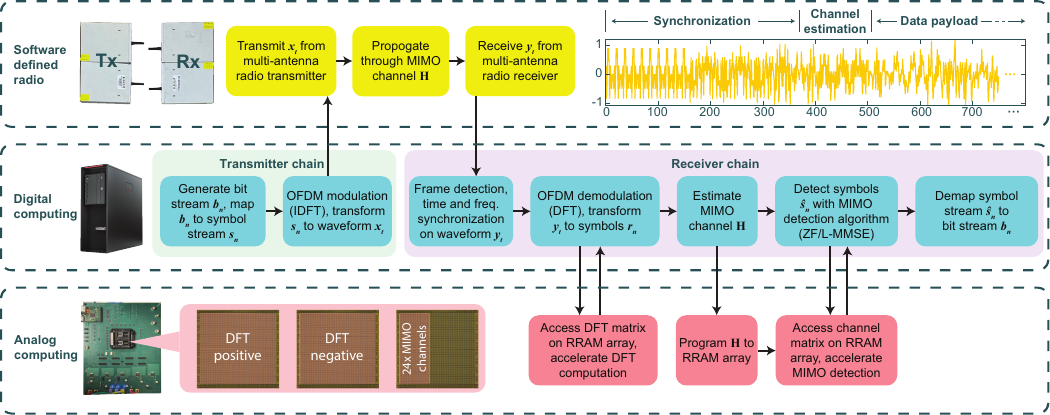}
    \caption{Proof-of-concept in-memory baseband processing experiment. A small-scale MIMO-OFDM system (32 sub-carriers, 2 transmit and 2 receive antennas) is validated in the experiment using SDRs, workstation and RRAM crossbar arrays. A schematic of the experimental system is presented. The realistic communication system is realized by SDR platform. Our in-memory computing test board provides physical measurement for the computation of DFT and MIMO detection modules. The computer bridges different platforms and controls the dataflow.}
    \label{Fig.7}
\end{figure*}

\begin{figure}[t!]
    \centering\hspace{-3mm}
    \subfigure[Transmitted Symbols]{
    \includegraphics[width=0.33\columnwidth]{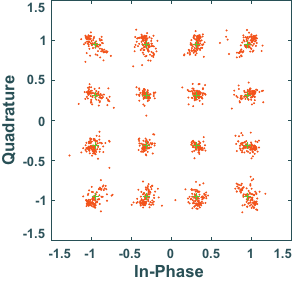}}\hspace{-2mm}
    \subfigure[Digital Processor]{
    \includegraphics[width=0.33\columnwidth]{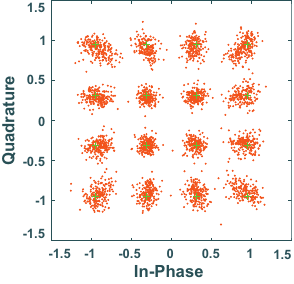}}\hspace{-2mm}
    \subfigure[RRAM Processor]{
    \includegraphics[width=0.33\columnwidth]{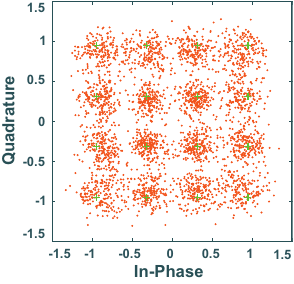}}
    \caption{Proof-of-concept in-memory baseband processing experimental results: (a) Constellation diagram of transmitted symbols. (b, c) Constellation diagrams of the symbols recovered at the receiver from (b) digital baseband processing and (c) RRAM-implemented in-memory baseband processing.}
    \label{Fig.8}
\end{figure}

\section{Channel Estimation Module}
To acquire the \emph{channel state information} (CSI) needed for MIMO detection, the channel matrix is estimated at the receiver using pilot signals that are sent by the transmitter and known a priori to the receiver. Many data symbols can be transmitted between two pilot signals separated by channel coherent time, amortizing the overhead of channel training. A larger ratio between data and pilot symbols improves the system throughput at the cost of adaptive to time-varying channels. The transmitted training matrix $\mathbf{P}\in\mathbb{C}^{N_t\times N_t}$ is known by the receiver, while the actual received matrix is $\mathbf{S}\in\mathbb{C}^{N_r\times N_t}$. By choosing the pilot signal as a unitary matrix~\cite{hampton2013introduction}, i.e., $\mathbf{P}\mathbf{P}^{\sf H}=\mathbf{I}$, the channel matrix estimated by \emph{maximum likelihood} (ML) or \emph{least square} (LS) is given as $\hat{\mathbf{H}}=\mathbf{S}\mathbf{P}^{\sf H}$. In the RRAM-based channel estimation module, the real mapped training matrix $\mathcal{R}(\mathbf{P})$ is stored in the RRAM array. Each row vector of the real mapped received matrix $\mathcal{R}(\mathbf{S})$ is translated to the supplied input voltages. The computation can be completed by $2N_r$ read pulses while the complexity is $O(N_r N_t^2)$ for traditional processors.

When ready, the row vectors of the estimated channel matrix are sequentially written into the RRAM array implementing the MIMO detector. We evaluate the performance of different writing process in terms of system latency. To this end, a mathematical model is developed to facilitate latency analysis for programming a 1T1R array as elaborated in Appendix~\ref{Appendix: latency analysis}. Consider the writing process using a train of pulses to program an $N\times N$ array in the row-by-row manner. It can be proved that the expected writing latencies of write-without-verification and write-with-verification schemes scale with the array size in the way no faster than $O(N\sqrt{\ln N})$ and $O(N\ln N)$, respectively. This contributes to the most latency in our design. For comparison, the computational complexity is $O(N^3)$ for traditional digital processors.


\section{Performance Evaluation of the Complete System}
Recall that we consider the MIMO-OFDM air interface where a transmitter/receiver integrates the RRAM-based OFDM and MIMO modules. The modules are separately validated in previous sections. Here, we report a system-level demonstration of a MIMO-OFDM system with 32 sub-carriers and 2 $\times$ 2 antennas for proof-of-concept. The RF chains are physically implemented using software defined radio (SDR) platform, which provides the realistic MIMO communication links over-the-air. The digital logic on a workstation regulates data generation, executes frame synchronization algorithms, orchestrates the operation of the other two platforms, and controls the system data flow. The system schematic is presented in Fig.~\ref{Fig.7}. The workflow is described as follows. The workstation randomly generates bit stream and maps bits to symbols in 16-QAM constellation diagram. The symbols are then transformed to time domain waveforms by OFDM modulation, which are fed in the SDR transmitter to radiate the wireless signals at carrier frequency into the air by the two transmit antennas. The constellation diagram of symbols in the data payload when they are emitted from the transmitter is presented in Fig.~\ref{Fig.8}(a). The dispersion of constellation points results from the thermal noise in transmitter circuit and the non-ideality of the RF components (e.g., nonlinear power amplifier response). The RF signal is captured by the two receive antennas at the SDR receiver, and down-converted to baseband signal with the locally generated carrier frequency. After that, the channel matrix is estimated using pilot symbols and programmed into RRAM arrays as mentioned. The data payloads are then processed using RRAM-implemented modules (i.e., DFT and MIMO detection), and the recovered constellation diagram is presented in Fig.~\ref{Fig.8}(c) with MER being 12.83dB. For comparison, the constellation points from a digital baseband processor are given in Fig.~\ref{Fig.8}(b) whose MER is 17.43dB. The performance loss of our RRAM-implemented system mainly comes from the defective RRAM devices and programming errors, which are compensatory as discussed. Our demonstration proves the feasibility of system-level RRAM-based in-memory baseband processing in a real wireless communication system.

\begin{figure}[t!]
    \centering
    \includegraphics[width=\columnwidth]{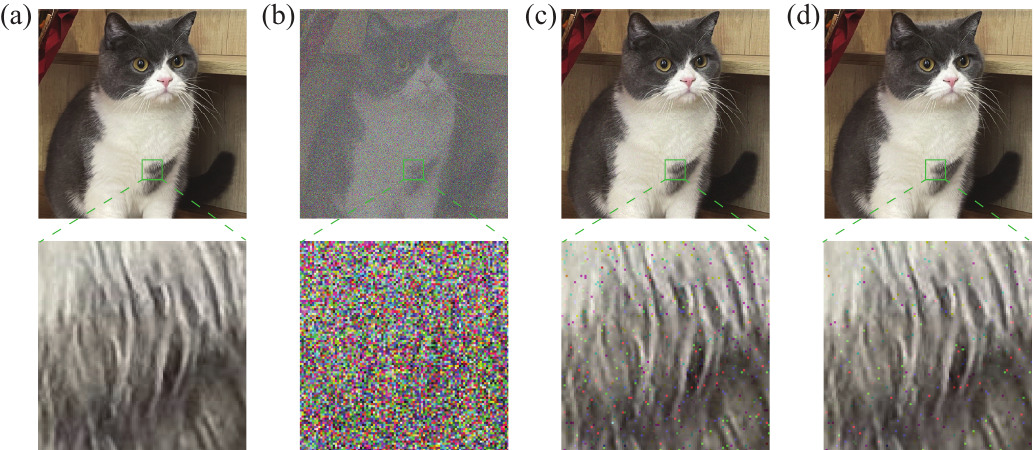}
    \caption{An illustration of the communication performance of transmitting an image over a noisy channel. (a) The original image. (b, c) The recovered images are from RRAM-based baseband processor where the RRAM arrays at the MIMO detector are programmed by (b) write-without-verification and (c) write-with-verification schemes. (d) Benchmark: software result.}
    \label{Fig.9}
\end{figure}

In the following, we perform the simulations of a large-scale RRAM-accelerated communication system corresponding to the standard of 5G new radio (NR) (see Table~\ref{Table: simulation parameters}). The simulation of RRAM array programming is based on the RRAM model calibrated using the experimentally acquired device properties such as the Ohmic behaviour (see Fig.~\ref{Fig.2}(f)) and the evolution of the conductance with voltage pulses (see Fig.~\ref{Fig.2}(g)). Both the cycle-to-cycle variations and read noise during RRAM programming are included in our simulations. 
Since the transmitter is much simpler than the receiver, we focus on the RRAM-based receiver in the remainder of this section. To visually demonstrate the performance of our designed in-memory baseband processor, we consider the specific task of uncoded transmission of an image as shown in Fig.~\ref{Fig.9}(a). The image recovered at the receiver are presented in Fig.~\ref{Fig.9}(b) and Fig.~\ref{Fig.9}(c) where RRAM devices are programmed using writing without and with verification schemes, respectively. As a benchmark for comparison, the image resulting from a digital baseband processor is shown in Fig.~\ref{Fig.9}(d). One can observe that the performance of the write-without-verification scheme is poor while the other scheme with verification performs similarly as the ideal processor. To quantify the performance, we present the relation between MER (and BER) and SNR for both schemes as shown in Fig.~\ref{Fig.10}(a) (and Fig.~\ref{Fig.10}(b)). The simulation results are aligned with the earlier observation and show that write-with-verification scheme outperforms the other in terms of communication performance. From the perspective of latency and energy efficiency, the performance is compared in Fig.~\ref{Fig.10}(c) and Fig.~\ref{Fig.10}(d). As shown in Table~\ref{Table: comparison with CMOS-based digital processor}, the throughput and energy-efficiency of the proposed RRAM-based in-memory baseband processing exceed those of any reported CMOS-based digital processors~\cite{snapdragonx65,fryza2014power,chen2022507,liu2018high,tang20192}. 
For example, under the specifications in Table~\ref{Table: simulation parameters}, the throughput and energy efficiency can achieve up to 160.8Gb/s and 4637Gb/J, exceeding state-of-the-art digital counterparts in the literature by more than 329$\times$ and 671$\times$, respectively. Moreover, by reasonable conversion, our design is estimated to supports a throughput 91.2$\times$ higher than one of the state-of-the-art commercial modems, i.e., Qualcomm Snapdragon X65. Underpinning the improvements is the ultra-fast one-step baseband processing after channel estimation such that the baseband latency mostly comes from programming the RRAM arrays of MIMO detection module at the beginning of the frame. In contrast, for CMOS-based digital processors, data symbols are processed by executing the DFT (or FFT) and MIMO-detection algorithms using digital logic circuits, both suffering from high complexity as discussed. Next, there exists a tradeoff between communication performance and latency, i.e., higher performance requires better programming accuracy and thus longer latency. On the one hand, the write-without-verification scheme shows lower latency but poor communication performance in terms of BER and MER, a result of the intrinsic stochasticity of RRAM. On the other hand, RRAM with more states can achieve higher precision but possibly more pulses are needed to reach the target conductance value. 

\begin{table*}[t!]
\centering
\caption{Parameters for Large-Scale MIMO-OFDM System Simulation}
\begin{tabular}{|c|c|c|c|c|c|}
\hline
Parameter & \# sub-carriers & \# Tx antennas & \# Rx antennas & \# Pilot symbols & \# Symbols/frame \\ \hline
Notation  & $N_c$           & $N_t$          & $N_r$          & $N_t$            & $M$                  \\ \hline
Value     & 1024            & 4              & 4              & 4                & 14$\times$160        \\ \hline
\end{tabular}
\label{Table: simulation parameters}
\end{table*}

\begin{table*}[t!]
\centering
\caption{Comparison with CMOS-based Digital Processors}
\begin{tabular}{|c|c|c|c|c|c|}
\hline
Processor                                                                                                                     & Technology & \begin{tabular}[c]{@{}c@{}}Latency\\ (ms)\end{tabular}    & \begin{tabular}[c]{@{}c@{}}Energy\\ (mJ)\end{tabular}    & \begin{tabular}[c]{@{}c@{}}Communication\\ Throughput (Gb/s)\end{tabular} & \begin{tabular}[c]{@{}c@{}}Energy\\ Efficiency (Gb/J)\end{tabular} \\ \hline
\begin{tabular}[c]{@{}c@{}}Qualcomm\\ Snapdragon X65 \cite{snapdragonx65}\end{tabular}                                                   & 4 nm       & \textless{}10   & N/A             & \textless{}5                                                              & N/A                                                                \\ \hline
\begin{tabular}[c]{@{}c@{}}TMS320C6678\\ 8-core digital signal processor \cite{fryza2014power}\end{tabular}                               & 20 nm      & 589.9           & 6548            & 0.0621                                                                    & 0.0056                                                             \\ \hline
\begin{tabular}[c]{@{}c@{}}Domain adaptive processor\\ 16$\times$DAP in literature \cite{chen2022507}\end{tabular}                     & 12 nm      & 74.95           & 6547            & 0.4888                                                                    & 0.0056                                                             \\ \hline
\begin{tabular}[c]{@{}c@{}}Combined digital baseband modules: \\ FFT in \cite{liu2018high} and MIMO detector in \cite{tang20192}\end{tabular} & 65 nm      & 50.17           & 5.3024          & 0.7303                                                                    & 6.9091                                                             \\ \hline
\textbf{\begin{tabular}[c]{@{}c@{}}Proposed RRAM-based \\ Baseband Processor\end{tabular}}                                    & \textbf{-} & \textbf{0.2278} & \textbf{0.0079} & \textbf{160.8}                                                            & \textbf{4637}                                                      \\ \hline
\end{tabular}
\label{Table: comparison with CMOS-based digital processor}
\end{table*}

\begin{figure}[t!]
    \centering
    \subfigure[MER v.s. SNR]{
    \includegraphics[width=0.45\columnwidth]{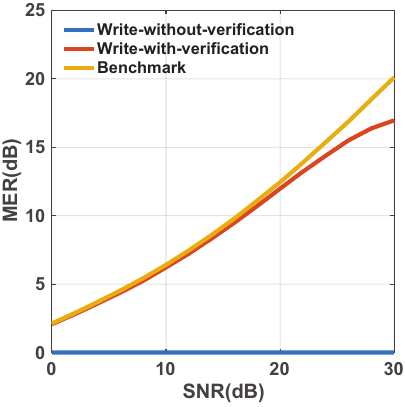}}
    \hspace{2mm}
    \subfigure[BER v.s. SNR]{
    \includegraphics[width=0.45\columnwidth]{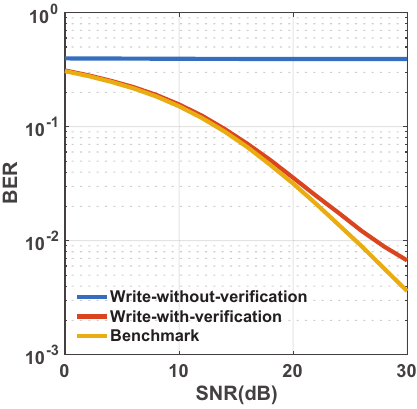}}
    \subfigure[Latency v.s. \#Antennas]{
    \includegraphics[width=0.45\columnwidth]{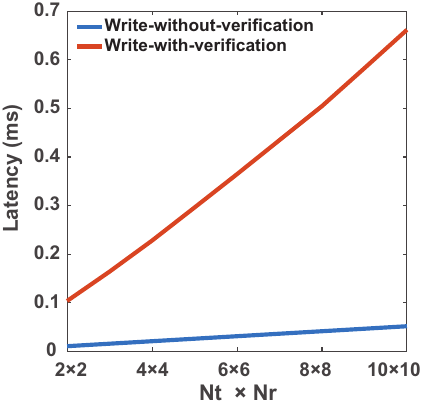}}
    \hspace{2mm}
    \subfigure[Energy v.s. \#Antennas]{
    \includegraphics[width=0.45\columnwidth]{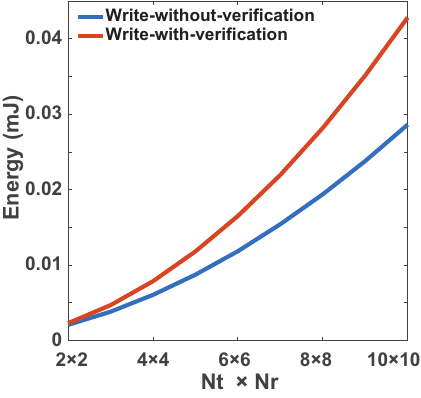}}
    \caption{Performance evaluation of in memory MIMO-OFDM baseband processing using experimental RRAMs. The simulations target a large-scale MIMO-OFDM system of 1024 sub-carriers, 4 transmit antennas and 4 receive antennas unless specified otherwise. The behavioural model of RRAM devices comes from the experimental testing of our fabricated RRAM devices. The simulation curves are averaged over 200 trials to eliminate the randomness of channel and RRAM devices. (a, b) Under different channel conditions, the resultant (a) MER and (b) BER from the three schemes. (c, d) For writing with and without verification schemes, the (c) latency and (d) energy are evaluated in terms of different MIMO sizes.}
    \label{Fig.10}
\end{figure}


\begin{table*}[t!]
\centering
\caption{Parameters of Memristors' Behavioral Models for the Simulations}
\begin{tabular}{|c|c|c|c|c|c|c|}
\hline
Memristor Device   & Mechanism   & \begin{tabular}[c]{@{}c@{}}Pulse\\ width\end{tabular} & \begin{tabular}[c]{@{}c@{}}State\\ number\end{tabular} & \begin{tabular}[c]{@{}c@{}}Cycle-to-cycle\\ vatiation\end{tabular} & $G_{\max}/G_{\min}$ & \begin{tabular}[c]{@{}c@{}}Operation\\ voltages\end{tabular} \\ \hline
\begin{tabular}[c]{@{}c@{}}Ta/TaO$_x$/Pt\\ our RRAM\end{tabular}
& Filament & 10 ns & 256 & \begin{tabular}[c]{@{}c@{}}4.41\% (P)\\ 5.44\% (D)\end{tabular}  
& 230.99/79.93 $\mu$S  & 0.65/-0.575 V \\ \hline
\begin{tabular}[c]{@{}c@{}}TiN/HZO/SiO$_2$/Si\\ FeFET \cite{jerry2017ferroelectric}\end{tabular} 
& FeFET & 75 ns & 32 & 0.5\% & 1.79/0.04 $\mu$S & 3.65/-2.95 V \\ \hline
\multirow{2}{*}{\begin{tabular}[c]{@{}c@{}}Ag/PZT/Nb:SrTiO$_3$\\ FTJ \cite{luo2022high}\end{tabular}} & \multirow{2}{*}{FTJ} & 10 ns & 256 & 2.06\% & 80/1 $\mu$S & 1.675/-3.5 V  \\ \cline{3-7} 
&   & 630 ps & 150 & 3.65\% & 27.5/1 $\mu$S &4/-5 V \\ \hline
\end{tabular}
\label{Table: memristors parameters} 
\end{table*}

\section{Discussion}
This work demonstrates the feasibility of UFEE MIMO-OFDM baseband processing by leveraging the emerging in-memory computing technology based on RRAM arrays. The processing latency and energy are mostly contributed by the programming of the RRAM arrays for MIMO detection due to periodic channel estimation, while the following processing of data symbols can be completed in one-step. These advantages promise a feasible approach for realizing UFEE baseband processing. It shall be also emphasized that the proposed in-memory baseband processing not only works on RRAM but can be readily applied to other emerging in-memory computing technologies including phase change, ferroelectric and magnetoresistive memories, as detailed in Table~\ref{Table: memristors parameters} which lists the device features of our experimental RRAM devices and other types of memristor. There are some observations from the simulation results in Fig.~\ref{Fig.11}. First, we compared two different schemes for updating memristor arrays: writing with and without verification, elucidating the importance of verification and low cycle-to-cycle variation in ensuring the accuracy of the operations. To ensure satisfactory communication performance, write-with-verification is suggested for updating the memristor arrays even if the cycle-to-cycle variation is relatively small (e.g., $\sim$0.5$\%$) as shown in the simulation result of programming ferroelectric FET (FeFET)~\cite{jerry2017ferroelectric} without verification. Second, memristor can be further improved using ultra-narrow pulse width along with relatively large number of states to achieve ultra-fast conductance updates without compromising the communication performance. For example, the simulation results in Fig.~\ref{Fig.11} show that the UFEE requirements can be met using ferroelectric tunnel junction (FTJ) which is reported for high precision attainable using sub-nanosecond pulses~\cite{luo2022high}. Leveraging the behavioural model of such memristors, the latency of our in-memory baseband processing system can be reduced to the scale of several microseconds and the energy consumption to the scale of several micro-Jules, which meets the UFEE requirements of 6G communications. Furthermore, in-memory baseband processing is more effective for applications with less stringent precision requirements. For example, if the transmitted messages, such as images, are inputs to the downstream neural networks for inference, the models’ robustness against programming noise can ensure high classification accuracy. Overall, developing the proposed in-memory baseband processing into a versatile technology is believed to provide a feasible approach for realizing the 6G vision on supporting future services and applications with extremely low latency and high energy-efficiency.

\begin{figure}[t!]
    \centering
    \subfigure[MER v.s. SNR]{\label{Fig.5ab_a}
    \includegraphics[width=0.45\columnwidth]{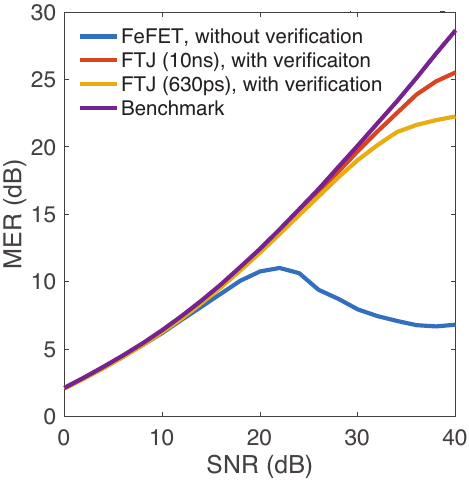}}
    \hspace{2mm}
    \subfigure[BER v.s. SNR]{\label{Fig.5ab_b}
    \includegraphics[width=0.45\columnwidth]{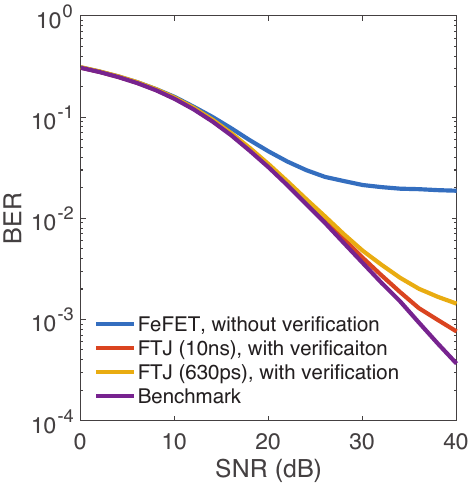}}
    \subfigure[Latency v.s. \#Antennas]{\label{Fig.5cd_a}
    \includegraphics[width=0.45\columnwidth]{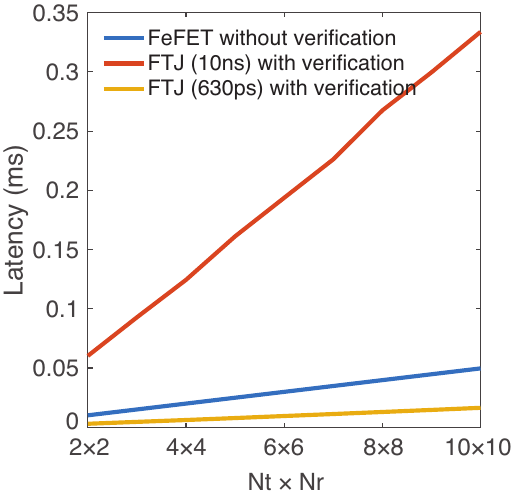}}
    \hspace{2mm}
    \subfigure[Energy v.s. \#Antennas]{\label{Fig.5cd_b}
    \includegraphics[width=0.45\columnwidth]{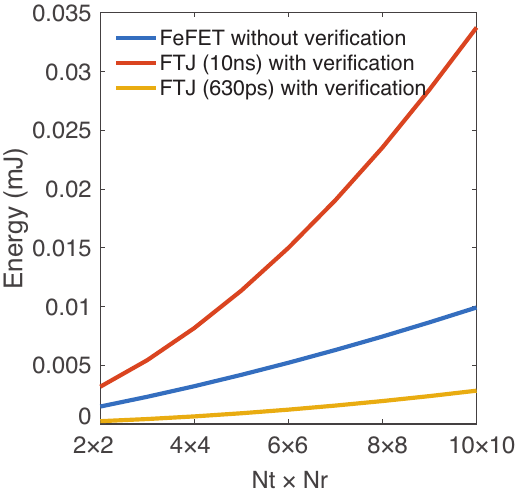}}
    \caption{Performance evaluation of in memory MIMO-OFDM baseband processing using memristors in the literature. For the three memristor behavioural models, the (a) latency and (b) energy are evaluated in terms of different MIMO sizes. Under different channel conditions, the resultant (c) MER (in dB) and (d) BER are shown.}
    \label{Fig.11}
\end{figure}


\appendix

\subsection{RRAM Device Fabrication and Integration}\label{Appendix: RRAM device fabrication}
The integrated chip platform is comprised of three 64 $\times$ 64 Ta/TaO$_x$/Pt RRAM crossbar arrays, together with digital control and analogue sensing circuits to realize in-memory computing. The driving and sensing analogue circuits are taped out with TSMC’s 180nm technology node. After the integration of the RRAM devices with the CMOS circuits, the chip is wire-bonded in a package. The RRAM devices have a lateral dimension of 50nm $\times$ 50nm, fabricated in house with back-end-of-line (BEOL) processes. The layers of the RRAM materials stack (Ta/TaO$_x$/Pt) are deposited with room temperature sputtering, and the electrodes are patterned with electron-beam lithography. The deposited TaO$_x$, serving as the switching layer, has a thickness of $\sim$2nm. 


\subsection{DFT/IDFT Circuit}\label{Appendix: DFT/IDFT and channel estimator circuit}
Consider the circuit with DFT matrix $\mathbf{W}\in\mathbb{C}^{N_c\times N_c}$. The real mapping of the DFT matrix $\mathcal{R}(\mathbf{W})\in\mathbb{R}^{2N_c\times 2N_c}$ is scaled into the RRAM devices’ conductance range by a scaling factor $\alpha$, giving the conductance matrix $\mathbf{G}=\alpha\mathcal{R}(\mathbf{W})\in\mathbb{R}^{2N_c\times 2N_c}$. The real conductance matrix $\mathbf{G}$ is implemented by the difference between a pair of conductance arrays, $\mathbf{G}^+-\mathbf{G}^-$, with the utilization of inverting amplifier to invert the voltages. The received signal $\mathbf{y}\in\mathbb{C}^{N_c\times1}$ is translated to the input voltages with the real vector mapping, such that $\mathbf{v}=\mathcal{T}(\mathbf{y})\in\mathbb{R}^{2N_c\times1}$. Leveraging Ohm’s law (i.e., current = conductance $\times$ voltage), the multiplications $\{G_{kl}^+v_l\}$ and $\{G_{kl}^-v_l\}$ are achieved. Then, Kirchhoff’s current law sums these contributions along each row line and the read circuit integrates all the signals, giving the current at the $k$-th column $i_k=\sum_{l=1}^L(G_{kl}^+-G_{kl}^-)v_l$. Therefore, the output currents at the read circuit give the result: $\mathbf{i}=(\mathbf{G}^+-\mathbf{G}^-)\mathbf{v}$, which gives the DFT result $\alpha\mathcal{T}(\mathbf{x})=\alpha\mathcal{R}(\mathbf{W})\mathcal{T}(\mathbf{y})$. Since DFT matrix is unitary, i.e., $\mathbf{W}^{-1}=\mathbf{W}^{\sf H}$, IDFT module circuit is the same as that of DFT when we replace $\mathcal{R}(\mathbf{W})$ with $\mathcal{R}(\mathbf{W})^{\sf T}$. 

\subsection{L-MMSE/ZF MIMO Detector Circuit}\label{Appendix: L-MMSE/ZF MIMO Detector Circuit}
Consider the L-MMSE detection circuit with channel matrix $\mathbf{H}\in\mathbb{C}^{N_r\times N_t}$. The real mapped channel matrix $\mathcal{R}(\mathbf{H})\in\mathbb{R}^{2N_r\times2N_t}$ is scaled into the RRAM devices’ conductance range by a scaling factor $\alpha$, giving the conductance matrix $\mathbf{G}=\mathbf{G}^+-\mathbf{G}^-=\alpha\mathcal{R}(\mathbf{H})\in\mathbb{R}^{2N_r\times2N_t}$ which is implemented as the difference between two RRAM arrays. The real vector mapping of the received signal $\mathcal{T}(\mathbf{y})\in\mathbb{R}^{2N_r\times1}$ is translated to input currents. To make the voltages in the circuit within a reasonable range, the input currents are also scaled as $\mathbf{i}=\alpha\mathcal{T}(\mathbf{y})\in\mathbb{R}^{2N_r\times1}$. The two arrays at the left-hand side constitute the conductance matrix $-\mathbf{G}=\mathbf{G}^--\mathbf{G}^+$ with voltages $\mathbf{v}$ supplied at the bottom of the nether array. The Kirchhoff’s current law sums the output currents from the left RRAM array pair, $-\mathbf{G}\mathbf{v}$, and the input currents, $\mathbf{i}$, such that the input currents at the operational amplifiers are $\mathbf{i}'=-\mathbf{G}\mathbf{v}+\mathbf{i}$. Hence, the output voltages that supplied to the right RRAM array pair are $\mathbf{v}'=-\frac{\mathbf{i}'}{g_1}=\frac{\mathbf{G}\mathbf{v}-\mathbf{i}}{g_1}$, where $g_1$ is the feedback conductance of the TIAs. Then, the right RRAM array pair, whose conductance matrix is represented by $\mathbf{G}^{\sf T}=(\mathbf{G}^+-\mathbf{G}^-)^{\sf T}$, performs the MVM computation and outputs the current vector $\mathbf{i}''=\mathbf{G}^{\sf T}\mathbf{v}'=\mathbf{G}^{\sf T}\frac{\mathbf{G}\mathbf{v}-\mathbf{i}}{g_1}$. The currents are applied to the other set of TIAs, so that $\mathbf{i}''=-g_2\mathbf{v}$, where $g_2$ is the feedback conductance of the TIAs in this set. Accordingly, one can observe the relation: $\mathbf{G}^{\sf T}\frac{\mathbf{G}\mathbf{v}-\mathbf{i}}{g_1}=-g_2\mathbf{v}$, which gives the output voltages $\mathbf{v}=\left(\mathbf{G}^{\sf T}\mathbf{G}+g_1 g_2\mathbf{I}\right)^{-1}\mathbf{G}^{\sf T}\mathbf{i}$. By setting the SNR as $\alpha^2(g_1 g_2 )^{-1}$, the designed L-MMSE circuit outputs the desired vector: $\mathcal{T}(\hat{\mathbf{x}})=\left(\mathcal{R}(\mathbf{H})^{\sf T}\mathcal{R}(\mathbf{H})+\frac{1}{\sf SNR}\mathbf{I}_{2N_t\times 2N_t} \right)^{-1}\mathcal{R}(\mathbf{H})^{\sf T}\mathcal{T}(\mathbf{y})$. When the feedbacks of the TIAs in the second set are open, i.e., $g_2=0$, the output voltages of the circuit are $\mathbf{v}=(\mathbf{G}^{\sf T}\mathbf{G})^{-1}\mathbf{G}^{\sf T}\mathbf{i}$. This computes the ZF and gives the desired vector $\mathcal{T}(\hat{\mathbf{x}})=\left(\mathcal{R}(\mathbf{H})^{\sf T}\mathcal{R}(\mathbf{H})\right)^{-1}\mathcal{R}(\mathbf{H})^{\sf T}\mathcal{T}(\mathbf{y})$.

 \subsection{Implementing Complex Matrices and Vectors}\label{Appendix: implementing complex matrices and vectors}
Both the channel entries and signals are in the complex domain while employing differential pairs of RRAM devices can only represent real numbers. To address this issue, we propose to apply the equivalent matrices and vectors of real entries instead. Inspired by the matrix representation of complex numbers, i.e., the mapping $a+b\mathrm{j}\to\left(\begin{matrix}a&-b\\b&a\end{matrix}\right)$ is a ring isomorphism from the field of complex numbers to the ring of these matrices, we extend the method to complex matrices and define the mappings as follows.
\begin{definition}\label{definition: real matrix mapping}
(Real Matrix Mapping). \emph{Define the mapping $\cR:\mC^{K\times L}\to\mR^{2K\times2L}$, which transforms a complex matrix $\bA=\Re(\bA)+{\rm j}\Im(\bA)\in\mC^{K\times L}$ into a real matrix $\cR(\bA)\in\mR^{2K\times2L}$:
\begin{equation}
    \cR(\bA)=\l[\begin{matrix}
    \Re(\bA)&-\Im(\bA)\\
    \Im(\bA)&\Re(\bA)
    \end{matrix}\r].
 \end{equation}
 }
 \end{definition}
 The defined mapping preserves the basic operations of matrices (see Lemma~\ref{lemma: real mapping properties}), making it a feasible method for in-memory baseband processing implementation.
\begin{lemma}\label{lemma: real mapping properties}
(Properties of Equivalent Real Matrices). \emph{Some basic properties of the mapping $\cR$ defined in Definition \ref{definition: real matrix mapping} are described as follows. For any matrices $\bA,\bB\in\mC^{K\times L}$, $\bC\in\mC^{L\times N}$,
 \begin{align}
    \cR(\bA)+\cR(\bB)&=\cR(\bA+\bB),\\
    \cR(\bA)\cR(\bC)&=\cR(\bA\bC),\label{eqn: lemma 2}\\
    \cR(\bA^{\sf H})&=\cR(\bA)^{\sf T}.
\end{align}
 }
\end{lemma}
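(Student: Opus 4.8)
The plan is to verify all three identities by direct computation on the $2\times2$ block structure of $\cR$, exploiting the decomposition $\bA=\Re(\bA)+{\rm j}\Im(\bA)$ together with the $\mR$-linearity of the operators $\Re(\cdot)$ and $\Im(\cdot)$. The additive identity requires essentially no work: since $\Re(\bA+\bB)=\Re(\bA)+\Re(\bB)$ and $\Im(\bA+\bB)=\Im(\bA)+\Im(\bB)$, adding the two block matrices $\cR(\bA)$ and $\cR(\bB)$ block-by-block immediately reproduces $\cR(\bA+\bB)$, so I would dispatch it in one line.

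For the multiplicative identity \eqref{eqn: lemma 2}, which I expect to be the main obstacle, I would first expand the complex product
\begin{equation}
\bA\bC=\bigl(\Re(\bA)\Re(\bC)-\Im(\bA)\Im(\bC)\bigr)+{\rm j}\bigl(\Re(\bA)\Im(\bC)+\Im(\bA)\Re(\bC)\bigr),
\end{equation}
which identifies $\Re(\bA\bC)$ and $\Im(\bA\bC)$ as the two parenthesized real matrices. I would then multiply the block matrices $\cR(\bA)$ and $\cR(\bC)$ directly and check that its four blocks coincide with these expressions: the two diagonal blocks both equal $\Re(\bA)\Re(\bC)-\Im(\bA)\Im(\bC)=\Re(\bA\bC)$, while the off-diagonal blocks equal $\pm\bigl(\Re(\bA)\Im(\bC)+\Im(\bA)\Re(\bC)\bigr)=\pm\Im(\bA\bC)$, matching the defining sign pattern of $\cR(\bA\bC)$. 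The only delicate point is the sign bookkeeping, in particular that the product of the two $-\Im$ entries contributes the $-\Im(\bA)\Im(\bC)$ term with the correct sign to the real part.

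Finally, for the conjugate-transpose identity I would use $\bA^{\sf H}=\Re(\bA)^{\sf T}-{\rm j}\Im(\bA)^{\sf T}$, so that $\Re(\bA^{\sf H})=\Re(\bA)^{\sf T}$ and $\Im(\bA^{\sf H})=-\Im(\bA)^{\sf T}$. Substituting these into the definition of $\cR$ yields a block matrix whose entries I would compare against the block transpose of $\cR(\bA)$, obtained by transposing each block and swapping the two off-diagonal blocks; the two agree entry-by-entry. Since every step is an identity between real matrices of matching dimensions, no well-definedness or convergence issues arise, and the whole argument reduces to matching the four blocks in each case.
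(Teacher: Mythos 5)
Your proposal is correct: all three identities follow from exactly the block-by-block computations you describe, and the sign bookkeeping in the product $\cR(\bA)\cR(\bC)$ works out as you claim. This is the same route the paper takes—it omits the argument as a ``straightforward calculation of matrices,'' and your proposal simply writes that calculation out.
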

The proof involves straightforward calculations of matrices and thus omitted for brevity.
It can be inferred from the equation \eqref{eqn: lemma 2} that $\cR(\bA^{-1})=\cR(\bA)^{-1}$ if $\bA$ is invertible. Given the mapping, the complex matrices can be written into the RRAM arrays without specific changes or auxiliary circuits. 
On the other hand, the proposed method can be applied to complex vectors as well, where one complex vector $\bx\in\mC^{K\times1}$ is mapped to a real matrix $\cR(\bx)\in\mR^{2K\times2}$. Then the \emph{matrix-vector multiplication} (MVM) can be achieved following equation \eqref{eqn: lemma 2}. However, it takes two steps to complete the operation since $\cR(\bx)$ is a matrix with two columns. To further improve the computational efficiency, we propose to implement the complex vector, which is usually the input voltages/currents for RRAM array, using the following transformation.
\begin{definition}\label{definition: real vector mapping}
(Real Vector Mapping). \emph{Define the mapping $\cT:\mC^{K\times 1}\to\mR^{2K\times1}$, which transforms a complex vector $\bx\in\mC^{K\times L}$ into a real vector $\cT(\bx)\in\mR^{2K\times1}$:
\begin{equation}
    \cT(\bx)=\l(\begin{matrix}
    \Re{(\bx})\\
    \Im{(\bx)}
    \end{matrix}\r).
\end{equation}
}
\end{definition}
The proposed mappings in Definition \ref{definition: real matrix mapping} and \ref{definition: real vector mapping} make it possible to realize one-shot MVM computation as shown below.
 \begin{lemma}\label{lemma: MVM real matrix and vector}
 (One-Shot MVM Operation Between Equivalent Real Matrix and Vector). \emph{For any matrix $\bA\in\mC^{K\times L}$ stored in RRAM array and vector $\bx\in\mC^{L\times1}$ translated as supply voltages, one-shot MVM is realized by the relation:
 \begin{equation}
    \cT(\bA\bx)=\cR(\bA)\cT(\bx).
 \end{equation}
 }
 \end{lemma}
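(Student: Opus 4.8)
The plan is to obtain the one-shot identity as a direct corollary of the multiplicativity property \eqref{eqn: lemma 2} already established in Lemma~\ref{lemma: real mapping properties}, rather than expanding everything into real and imaginary parts. The bridge between the two mappings $\cR$ and $\cT$ of Definitions~\ref{definition: real matrix mapping} and \ref{definition: real vector mapping} is the simple observation that, when the argument is a vector, $\cT$ just extracts the first block-column of $\cR$.

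First I would regard $\bx\in\mC^{L\times1}$ as a complex matrix with a single column, so that the real matrix mapping $\cR(\bx)\in\mR^{2L\times2}$ is well defined. Writing $\bee_1=(1,0)^{\sf T}\in\mR^{2\times1}$, a glance at Definitions~\ref{definition: real matrix mapping} and \ref{definition: real vector mapping} shows that the first column of $\cR(\bx)$ is exactly $\cT(\bx)$, i.e.\ $\cT(\bx)=\cR(\bx)\bee_1$. The same relation holds for any complex vector, so in particular $\cT(\bA\bx)=\cR(\bA\bx)\bee_1$ because $\bA\bx\in\mC^{K\times1}$.

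With this in hand the identity follows in one line by applying \eqref{eqn: lemma 2} to the pair $(\bA,\bx)$, which is the case $N=1$ of Lemma~\ref{lemma: real mapping properties}, and then reading off the first column:
\[
    \cT(\bA\bx)=\cR(\bA\bx)\bee_1=\cR(\bA)\cR(\bx)\bee_1=\cR(\bA)\cT(\bx),
\]
where the final step again uses $\cR(\bx)\bee_1=\cT(\bx)$. This reuses the already-proven matrix lemma and avoids any fresh block computation.

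As a self-contained fallback I would verify the claim by direct expansion: multiplying $\cR(\bA)=\left[\begin{smallmatrix}\Re(\bA)&-\Im(\bA)\\\Im(\bA)&\Re(\bA)\end{smallmatrix}\right]$ against $\cT(\bx)=\left[\begin{smallmatrix}\Re(\bx)\\\Im(\bx)\end{smallmatrix}\right]$ yields the two blocks $\Re(\bA)\Re(\bx)-\Im(\bA)\Im(\bx)$ and $\Im(\bA)\Re(\bx)+\Re(\bA)\Im(\bx)$, which are precisely $\Re(\bA\bx)$ and $\Im(\bA\bx)$ obtained from expanding $\bA\bx=(\Re(\bA)+\mathrm{j}\Im(\bA))(\Re(\bx)+\mathrm{j}\Im(\bx))$ and separating real and imaginary parts; stacking them gives $\cT(\bA\bx)$. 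There is no genuine conceptual obstacle here: the only care needed is the bookkeeping of the four cross terms and the sign on the $-\Im(\bA)\Im(\bx)$ contribution, so the main risk is a transcription slip rather than any real difficulty.
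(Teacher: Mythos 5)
Your proposal is correct, and your primary argument takes a genuinely different route from the paper. The paper proves the lemma by direct expansion: it simply checks the two block identities $\Re(\bA\bx)=\Re(\bA)\Re(\bx)-\Im(\bA)\Im(\bx)$ and $\Im(\bA\bx)=\Im(\bA)\Re(\bx)+\Re(\bA)\Im(\bx)$, which is exactly your ``fallback'' computation. Your main argument instead treats $\bx$ as an $L\times 1$ complex matrix and uses the key observation $\cT(\bx)=\cR(\bx)\bee_1$, so that the one-shot identity follows from the multiplicativity $\cR(\bA)\cR(\bx)=\cR(\bA\bx)$ in \eqref{eqn: lemma 2}; this reuses Lemma~\ref{lemma: real mapping properties} and makes explicit the structural link that $\cT$ is just the first block-column of $\cR$, which the paper alludes to when it remarks that mapping a vector through $\cR$ yields a two-column matrix requiring two steps. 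What the paper's route buys is self-containment: the two scalar-level identities are verified with no dependence on the earlier lemma (whose proof is itself omitted). Your route buys economy and emphasizes why the lemma is the expected consequence of ring-isomorphism-style multiplicativity; both arguments are complete and correct.
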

 It is proved by checking the following two relations: $$\Re(\bA\bx)=\Re(\bA)\Re(\bx)-\Im(\bA)\Im(\bx)$$ and $$\Im(\bA\bx)=\Im(\bA)\Re(\bx)+\Re(\bA)\Im(\bx).$$ 
 Based on Lemmas \ref{lemma: real mapping properties} and \ref{lemma: MVM real matrix and vector}, the following two useful equations can be obtained:
 \begin{equation}
    \cT\l((\bA^{\sf H}\bA)^{-1}\bA^{\sf H}\bx\r)=\l(\cR(\bA)^{\sf T}\cR(\bA)\r)^{-1}\cR(\bA)^{\sf T}\cT(\bx).
 \end{equation}
 \begin{equation}
    (\mathbf{A}^{\sf H}\mathbf{A}+\lambda\mathbf{I})^{-1}\mathbf{A}^{\sf H}\mathbf{x}=\l(\cR(\bA)^{\sf T}\cR(\bA)+\lambda\mathbf{I}\r)^{-1}\cR(\bA)^{\sf T}\cT(\bx).
 \end{equation}
where $\lambda$ is a constant. The above equations correspond to MIMO detection implementation.

\subsection{Latency Analysis}\label{Appendix: latency analysis}
We aim at quantifying the latency of writing a MIMO channel matrix into a RRAM array in the row-by-row manner. In particular, the channel is assumed as Rayleigh fading while the RRAM array is comprised of 1T1R cells. Two writing schemes are analyzed: write-without-verification and write-with-verification.
\subsubsection{Communication model}
The input-output relation of a MIMO system with channel matrix $\mathbf{H}\in\mathbb{C}^{N_r\times N_t}$ is described as
 \begin{equation}
    \by=\bH\bx+\bz,
 \end{equation}
where $\bx\in\mC^{N_t\times1}$ and $\by\in\mC^{N_r\times1}$ denote the transmit and receive symbols, respectively. $\bz\sim\cC\cN(0,\sigma_n^2 \bI_{N_r})$ represents the AWGN in propagation. We consider i.i.d. Rayleigh fading model, where the entries of the channel matrix $\mathbf{H}$ follow i.i.d. zero-mean complex Gaussian distribution. For any $(i,j)$-th element in matrix $\bH$, we have $H_{ij}\sim\cC\cN(0,2\sigma_h^2)$ where $\Re(H_{ij})\sim\cN(0,\sigma_h^2)$ and $\Im(H_{ij})\sim\cN(0,\sigma_h^2)$.

The elements of channel matrix $\{H_{ij}\}$ vary in the whole real domain. We need to scale them into the feasible conductance range for RRAM devices. For each RRAM device, the maximum and minimum values of working conductance are denoted by $G_{\max}$ and $G_{\min}$, respectively. For convenience, we assume $G_{\min}=0$ such that a differential pair of RRAM devices can represent a real number in the interval $[-G_{\max},G_{\max}]$. In order to scale the channel matrix into this range, we apply the three-sigma rule:
\begin{align}
    &\Pr(-3\sigma_h\leq\Re{(H_{ij})}\leq3\sigma_h)=99.73\%, \\ &\Pr(-3\sigma_h\leq\Im{(H_{ij})}\leq3\sigma_h)=99.73\%.
\end{align}
Through this rule, we can guarantee the feasibility of representing channel matrix by a RRAM array with high probability. Accordingly, the variance of the channel elements as mentioned becomes $\sigma_h=G_{\max}/3$.


\subsubsection{Model of RRAM device}
To begin with, we focus on the RRAM device whose behavioral model is shown in Fig.~\ref{Fig.2}(f) and Fig.~\ref{Fig.2}(g). Consider the conductance update process using a train of write pulses. The pulse width is denoted as $\Delta t_w$ which is a minuscule value. Cycle-to-cycle variation $\sigma_c$ refers to the variation in conductance change at every write pulse. It is expressed as the percentage of the entire conductance range in the existing literature \cite{yu2018neuro,chen2018neurosim,chen2018technological,sun2019impact}. In other words, the standard variance of the per-cycle write noise, $\cN(0,\sigma_c^2)$, is presented by $\sigma_c=\gamma G_{\max}$ where $\gamma\in(0,1)$ denotes the percentage. Then, we can characterize the per-pulse conductance change as follows:
\begin{equation}\label{eqn: one cycle update}
    \Delta G=\frac{G_{\max}-G_{\min}}{N_p}+\frac{\sigma_c}{\sqrt{\Delta t_w}}\Delta W,
\end{equation}
where $\Delta G=G(t+\Delta t_w)-G(t)$ is the conductance change by applying one write pulse over $G(t)$, $N_p$ is pulse number corresponding to programming conductance from $G_{\min}$ to $G_{\max}$, and $\Delta W=W(t+\Delta t_w)-W(t)$ with $W(t)$ being a Winner process: $\Delta W\sim\cN(0,\Delta t_w)$. From \eqref{eqn: one cycle update}, we know the following knowledge of the RRAM device's state after one pulse:
\begin{align}
    \mE[G(t+\Delta t_w)| G(t)]&=G(t)+\frac{G_{\max}-G_{\min}}{N_p\times\Delta t_w}\Delta t_w,\\
    \var[G(t+\Delta t_w)| G(t)]&=\left(\frac{\sigma_{\sf c}}{\sqrt{\Delta t_w}}\right)^2\Delta t_w.
\end{align}
By introducing the slope parameter $\mu\triangleq\frac{G_{\max}-G_{\min}}{N_p\times \Delta t_w}$ and the diverting variance $\sigma\triangleq\frac{\sigma_c}{\sqrt{\Delta t_w}}$, the evolution of conductance state $G(t)$ is characterized by the stochastic differential equation (SDE) given the initial state $G_0$ (i.e., the conductance at the time $t=0$):
\begin{equation}\label{eqn: linear SDE}
    dG(t)=\mu dt+\sigma dW(t),\quad 
    G(0)=G_0.
\end{equation}
The solution of \eqref{eqn: linear SDE} gives an It\^o process: 
\begin{equation}
    G(t)=G_0+\mu t+\sigma\int_0^tdW(s).
\end{equation}
    
\subsubsection{Performance of RRAM device}
Let $p(g,t|G_0)$ represent the conditional probability density of $G(t)$ given initial state $G(0)=G_0$. For the SDE specified in \eqref{eqn: linear SDE}, the probability density of the solution satisfies the forward Kolmogorov equation (also known as Fokker-Planck equation) with the initial condition as follows:
\begin{equation}
\begin{aligned}\label{eqn: PDE linear model}
    \frac{\partial p(t,g)}{\partial t}&=-\mu\frac{\partial p(t,g)}{\partial g}+\frac{\sigma^2}{2}\frac{\partial^2p(t,g)}{\partial g^2},\\
    p(0,g)&=\delta(g-G_0),
\end{aligned}
\end{equation}
where $\delta(\cdot)$ is Dirac function. By solving the partial differential equation in \eqref{eqn: PDE linear model}, we obtain the following lemma.
\begin{lemma}\label{lemma: linear model probability density}
(Probability Density). \emph{The probability density for the conductance evolution $G(t)$ is
\begin{equation}
    p(t,g|G_0)=\frac{1}{\sqrt{2\pi\sigma^2t}}\exp\l(-\frac{(g-\mu t-G_0)^2}{2\sigma^2t}\r),
\end{equation}
which gives the Gaussian distribution $(G(t)|G_0)\sim\cN(G_0+\mu t,\sigma^2)$.}
\end{lemma}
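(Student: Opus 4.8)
The plan is to solve the Fokker--Planck equation \eqref{eqn: PDE linear model} by passing to the spatial Fourier domain, where the second-order PDE collapses to a first-order linear ODE in time. First I would define $\hat{p}(t,k)=\int_{\mathbb{R}}e^{-ikg}p(t,g)\,dg$ and apply it to both sides of \eqref{eqn: PDE linear model}. Integration by parts turns $\partial_g$ into multiplication by $ik$ and $\partial_g^2$ into $-k^2$, the boundary terms vanishing because $p$ decays at infinity. This yields
$$\frac{\partial\hat{p}(t,k)}{\partial t}=\left(-i\mu k-\frac{\sigma^2 k^2}{2}\right)\hat{p}(t,k),$$
with initial data $\hat{p}(0,k)=e^{-ikG_0}$ obtained by transforming the Dirac mass $\delta(g-G_0)$.

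Second, I would integrate this ODE in $t$ (treating $k$ as a parameter) by exponentiation, obtaining
$$\hat{p}(t,k)=\exp\left(-ik(G_0+\mu t)-\frac{\sigma^2 t}{2}k^2\right),$$
which is precisely the Fourier transform of a Gaussian density with mean $G_0+\mu t$ and variance $\sigma^2 t$. Performing the inverse transform --- a standard Gaussian integral --- then recovers the claimed density $p(t,g|G_0)$ and the law $(G(t)\,|\,G_0)\sim\cN(G_0+\mu t,\sigma^2 t)$.

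As an independent and shorter consistency check, I would invoke the explicit It\^o solution already displayed above: since $G(t)=G_0+\mu t+\sigma W(t)$ with $W(t)\sim\cN(0,t)$, and an affine map carries a Gaussian to a Gaussian, the mean $\mE[G(t)\,|\,G_0]=G_0+\mu t$ and variance $\var[G(t)\,|\,G_0]=\sigma^2 t$ can be read off immediately, matching the density above without solving any PDE.

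I expect the main difficulty to be bookkeeping rather than conceptual: one must justify differentiation under the integral sign and the vanishing of the boundary terms, and then carry out the Gaussian Fourier inversion so that the normalization $1/\sqrt{2\pi\sigma^2 t}$ comes out correctly. I would also flag the apparent typo in the stated law, whose variance should read $\sigma^2 t$ rather than $\sigma^2$: retaining the factor $t$ is forced both by the inversion and by the requirement that $p(t,g|G_0)$ collapse to $\delta(g-G_0)$ as $t\to0$.
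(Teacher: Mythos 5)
Your proof is correct and takes essentially the same route as the paper, which simply asserts that the lemma follows by solving the Fokker--Planck equation \eqref{eqn: PDE linear model}; your Fourier-transform solution fills in the details, and the It\^o-solution cross-check matches the explicit expression $G(t)=G_0+\mu t+\sigma\int_0^t dW(s)$ already displayed in the paper. Your flag of the typo is also correct: the stated law should read $(G(t)|G_0)\sim\cN(G_0+\mu t,\sigma^2 t)$, consistent with the displayed density.
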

The process $\{G(t),t\geq0\}$ is time homogeneous with independent increments. Without loss of generality, we assume the target conductance $\bar{G}$ is larger than the initial state, i.e., the increment of conductance is positive: $\Delta G\triangleq \bar{G}-G_0\geq0$. The writing time is denoted as $T$, which aims at increasing the conductance by the amplitude of $\Delta G$.
\begin{itemize}
    \item \textbf{Write-without-verification scheme}
    
    Given target conductance $\bar{G}$ and the increment $\Delta G$, the write latency of the RRAM device is determined by
    \begin{equation}
        T=\frac{\Delta G}{\mu}.
    \end{equation}
    Meanwhile, the achieved conductance state $G(T)$ is inaccurate, giving that
    \begin{equation}
        G(T)\sim\cN\Big(\bar{G},\frac{\sigma^2\Delta G}{\mu}\Big).
    \end{equation}
    \item \textbf{Write-with-verification scheme}
    
    To achieve the target conductance $\bar{G}$, the read pulse is applied after each write pulse to monitor the evolution of conductance state $G(t)$. We model it as the first passage time, which refers to the first time when the conductance state $G(t)$ achieves the target value $\bar{G}$,
    \begin{equation}
        T\triangleq\inf\{t\geq 0: G(t)=\bar{G}\}.
    \end{equation}
    By adding an absorbing boundary $p(t,\bar{G})=0$ to the partial differential equation (together with the initial condition) in \eqref{eqn: PDE linear model}, we obtain the probability density of the first passage time as the solution of the boundary value problem.
    \begin{lemma}
    (First Passage Time Probability Density). \emph{The probability density of the first passage time with the target increment conductance $\Delta G$ is given by
    \begin{equation}
        p(T|\Delta G)=\frac{\Delta G}{\sqrt{2\pi\sigma^2T^3}}\exp\l(-\frac{(\Delta G-\mu T)^2}{2\sigma^2T}\r),
    \end{equation}
    which gives the inverse Gaussian distribution $(T|\Delta G)\sim\cI\cG\Big(\Delta G/\mu,(\Delta G/\sigma)^2\Big)$.}
\end{lemma}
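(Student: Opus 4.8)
The plan is to solve the Fokker--Planck equation \eqref{eqn: PDE linear model} augmented with the absorbing boundary condition $p(t,\bar G)=0$, and then to extract the first passage time density as the probability flux escaping through $g=\bar G$. Throughout I would work with the increment $\Delta G=\bar G-G_0>0$ and use the spatial homogeneity of the process to place the walker at $G_0$ and the barrier at $\bar G$.

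First I would gauge away the drift. Writing $p(t,g)=e^{\alpha g+\beta t}u(t,g)$ and substituting into \eqref{eqn: PDE linear model}, the first-order term is eliminated and the zeroth-order term cancels precisely when $\alpha=\mu/\sigma^2$ and $\beta=-\mu^2/(2\sigma^2)$; for these choices $u$ satisfies the pure heat equation $\partial_t u=\tfrac{\sigma^2}{2}\partial_{gg}u$ with initial data $u(0,g)=\delta(g-G_0)$ and the same absorbing condition $u(t,\bar G)=0$, since the exponential factor never vanishes. For the driftless problem the method of images applies directly: a negative image source at the mirror point $2\bar G-G_0$ gives $u(t,g)=p_{\rm free}(t,g\mid G_0)-p_{\rm free}(t,g\mid 2\bar G-G_0)$, where $p_{\rm free}$ is the driftless Gaussian heat kernel (Lemma~\ref{lemma: linear model probability density} with $\mu=0$); one checks $u(t,\bar G)=0$ by symmetry. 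Transforming back yields the absorbed density $p_{\rm abs}(t,g)$ in closed form.

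Next I would identify the first passage density with the escaping flux. Writing \eqref{eqn: PDE linear model} in conservation form $\partial_t p+\partial_g J=0$ with current $J=\mu p-\tfrac{\sigma^2}{2}\partial_g p$, the survival probability $S(t)=\int_{-\infty}^{\bar G}p_{\rm abs}(t,g)\,dg$ satisfies $S'(t)=-J(t,\bar G)$, so $p(T\mid\Delta G)=J(T,\bar G)$. Because $p_{\rm abs}(t,\bar G)=0$, this reduces to $-\tfrac{\sigma^2}{2}\,\partial_g p_{\rm abs}(t,g)\big|_{g=\bar G}$. Differentiating the image solution, evaluating at $g=\bar G$, and collecting the prefactor $e^{\alpha\bar G+\beta t}$ produces $\dfrac{\Delta G}{\sqrt{2\pi\sigma^2 T^3}}\exp\!\big(\tfrac{\mu\Delta G}{\sigma^2}-\tfrac{\mu^2 T}{2\sigma^2}-\tfrac{(\Delta G)^2}{2\sigma^2 T}\big)$; completing the square rewrites the exponent as $-\tfrac{(\Delta G-\mu T)^2}{2\sigma^2 T}$, which is exactly the claimed density, and matching prefactor and exponent against the standard inverse Gaussian form with mean $\Delta G/\mu$ and shape $(\Delta G/\sigma)^2$ completes the identification.

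The step I expect to be delicate is the image construction in the presence of drift: the naive mirror-charge trick is valid only after the drift has been gauged away, so one must verify that the factor $e^{\alpha g+\beta t}$ is compatible with both the delta initial data and the absorbing boundary before invoking reflection. Fixing the sign convention of the current $J$, so that $p(T\mid\Delta G)$ emerges positive, is the other place where care is needed; a useful consistency check is that $\int_0^\infty p(T\mid\Delta G)\,dT=1$ when $\mu>0$, reflecting almost-sure attainment of the barrier under positive drift.
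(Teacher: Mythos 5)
Your proposal is correct and follows the same route the paper takes: the paper simply asserts that the first-passage density is the solution of the Fokker–Planck equation \eqref{eqn: PDE linear model} with the absorbing boundary $p(t,\bar G)=0$, which is exactly the boundary-value problem you solve by gauge transformation, method of images, and flux evaluation at the barrier. Your write-up supplies the calculation the paper omits, with correct constants $\alpha=\mu/\sigma^2$, $\beta=-\mu^2/(2\sigma^2)$ and the normalization check for $\mu>0$.
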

\end{itemize}

\subsubsection{Latency of RRAM array programming}
We consider the latency of writing the real mapped channel matrix $\cR(\hat{\bH})$ onto the 1T1R array in the row-by-row manner. For simplicity, we assume the array has been fully reset, i.e., all the RRAM devices are initialized with $G_{ij}(0)=0,~\forall(i,j)$. In this setting, $G_{ij}^+$ will be updated if $H_{ij}\geq0$, or $G_{ij}^-$ will be updated otherwise.
The conductance change of one device, e.g., $\Delta G_{ij}$ for the $(i,j)$-th device, follows the i.i.d. half-normal distribution over indices $\forall (i,j)$, giving that 
\begin{equation}
    \Delta G_{ij}\sim\l|\cN\l(0,{G_{\max}^2}/{9}\r)\r|,\quad i=1,\cdots,N_t,~j=1,\cdots,N_r.
\end{equation}
The matrix $\cR(\bH)$ has $2N_r$ rows and it consists of two identical processes each with updating $N_r$ rows. Thus, the expected latency of writing this matrix is
\begin{align}\label{eqn: write one array vs one row}
    T_{\sf write}=2\times\mE\l[\sum_{i=1}^{N_r}T_i^{\sf row}\r]=2N_r\times\mE[T_i^{\sf row}],
\end{align}
where the expectation is taken over channel entries. One-row latency, say the $i$-th row, is determined by the RRAM device consuming the largest write time, that is
\begin{equation}
    T_i^{\sf row}=\max_{1\leq j\leq 2N_t}T_{ij},
\end{equation}
where latencies $\{T_{ij}\}$ refer to writing $\Re{(H_{ij'})}$ and $\Im{(H_{ij'})}$, $j'=1,\cdots,N_t$ into the $i$-th row of RRAM array. Hence, latencies $\{T_{ij}\}_{j=1}^{2N_t}$ are $2N_t$ i.i.d. random variables. 
\begin{itemize}
    \item \textbf{Write-without-verification scheme}
    
    Recall that the write time of one RRAM device using write-without-verification scheme is determined by $T=\Delta G/\mu$. Thus, for the $i$-th row, the write time of the RRAM device at the $j$-th column follows the half-normal distribution, that is 
    \begin{equation}
        T_{ij}\sim\l|\cN\l(0,\frac{G_{\max}^2}{9\mu^2}\r)\r|,\quad j=1,\cdots,2N_r.
    \end{equation}
    where $|\cN(\cdot,\cdot)|$ denotes the half-normal distribution.
    \begin{theorem}
    (Expected Latency of Write-Without-Verification). \emph{Consider writing the scaled real mapped channel matrix into an 1T1R array row-by-row using write-without-verification scheme. The expected write latency is upper bounded by
    \begin{equation}
        T_{\sf write}\leq\frac{2\sqrt{2}G_{\max}}{3\mu}N_r\l(\sqrt{\ln{N_t}}+\frac{1}{\sqrt{\pi}\ln{N_t}}\r).
    \end{equation}
    }
    \end{theorem}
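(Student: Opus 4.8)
The plan is to reduce the statement to a sharp bound on the expected maximum of the per-device write times within a single row, and then to control that maximum through the tail-integral representation of the expectation rather than a crude moment-generating-function (MGF) estimate. By \eqref{eqn: write one array vs one row} and the fact that the $2N_t$ latencies $\{T_{ij}\}_{j=1}^{2N_t}$ in a row are i.i.d., it suffices to prove
\[
\mathbb{E}\!\left[\max_{1\le j\le 2N_t}T_{ij}\right]\le \frac{\sqrt{2}\,G_{\max}}{3\mu}\left(\sqrt{\ln N_t}+\frac{1}{\sqrt{\pi}\,\ln N_t}\right),
\]
since multiplying by $2N_r$ reproduces the claim. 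Writing $\sigma:=G_{\max}/(3\mu)$ for the scale of the underlying normal, each $T_{ij}$ is half-normal with CDF $F(x)=2\Phi(x/\sigma)-1$ and tail $\bar F(x)=2Q(x/\sigma)$, where $Q$ is the standard Gaussian tail. For $M:=\max_j T_{ij}$ I would then start from $\mathbb{E}[M]=\int_0^\infty\Pr(M>x)\,dx=\int_0^\infty\bigl(1-F(x)^{2N_t}\bigr)\,dx$.

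Next I would split this integral at a threshold $u$: on $[0,u]$ use $1-F^{2N_t}\le 1$, contributing $u$; on $[u,\infty)$ use the union bound $1-F(x)^{2N_t}\le 2N_t\,\bar F(x)$. The naive MGF route (bounding $\mathbb{E}[e^{sM}]\le 2N_t\cdot 2e^{s^2\sigma^2/2}$ and optimizing $s$) yields $\sigma\sqrt{2\ln(4N_t)}$, which carries $\ln(4N_t)$ inside the root and is too lossy. The decisive choice is instead to set the threshold at $u=\sigma\sqrt{2\ln N_t}$, keyed to $N_t$ and \emph{not} to $2N_t$, so that the leading term is exactly $u=\sqrt{2}\,\sigma\sqrt{\ln N_t}$.

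For the tail I would evaluate $\int_u^\infty\bar F(x)\,dx=2\sigma\int_a^\infty Q(t)\,dt=2\sigma\bigl(\phi(a)-aQ(a)\bigr)$ with $a:=u/\sigma=\sqrt{2\ln N_t}$, using the exact antiderivative $\int_a^\infty Q=\phi(a)-aQ(a)$. Crucially, rather than the crude $\phi(a)-aQ(a)\le\phi(a)$ (which would leave an $O(1)$ remainder of size $2\sqrt{2}\,\sigma/\sqrt{\pi}$), I would apply the refined Mills-ratio lower bound $Q(a)>\phi(a)\bigl(a^{-1}-a^{-3}\bigr)$, upgrading this to $\phi(a)-aQ(a)<\phi(a)/a^2$. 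Substituting $\phi(a)=\tfrac{1}{\sqrt{2\pi}}e^{-\ln N_t}=\tfrac{1}{\sqrt{2\pi}\,N_t}$ and $a^2=2\ln N_t$, the prefactor $2N_t$ cancels the $1/N_t$ in $\phi(a)$, and the surviving $1/a^2$ turns the remainder into the decaying term
\[
2N_t\cdot 2\sigma\cdot\frac{\phi(a)}{a^2}=\frac{\sqrt{2}\,\sigma}{\sqrt{\pi}\,\ln N_t}.
\]
Adding the two pieces gives the per-row bound above, and multiplying by $2N_r$ while restoring $\sigma=G_{\max}/(3\mu)$ yields the stated inequality.

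The main obstacle is entirely in the constants: a generic sub-Gaussian maximal inequality overshoots, so the argument must (i) place the threshold at $N_t$ to make the $2N_t$ union-bound factor cancel against $\phi(\sqrt{2\ln N_t})\propto 1/N_t$, and (ii) retain the second-order Mills term $\phi(a)/a^2$ in place of $\phi(a)$, which is precisely what converts an $O(1)$ remainder into the advertised $O(1/\ln N_t)$ correction with coefficient $1/\sqrt{\pi}$. The one inequality needing care is $\phi(a)-aQ(a)<\phi(a)/a^2$; it holds for every $a>0$ (trivially for $a\le 1$, and via the standard Mills bound for $a>1$, so $N_t\ge 2$ suffices), and everything else is bookkeeping.
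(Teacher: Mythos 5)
Your proposal is correct and follows essentially the same route as the paper's own proof: the same reduction $T_{\sf write}=2N_r\,\mE[\max_j T_{ij}]$, the same tail-integral split at threshold $\bar\sigma\sqrt{2\ln N_t}$ with a union/Bernoulli bound, the same exact antiderivative $\int_a^\infty Q = \phi(a)-aQ(a)$, and the same refined Mills-ratio inequality $Q(a)\geq\phi(a)\left(a^{-1}-a^{-3}\right)$ to turn the remainder into the $1/(\sqrt{\pi}\ln N_t)$ term. No gaps; the papers differ only in presentation.
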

    \begin{proof}
    Since $T_i^{\sf row}=\max_{1\leq j\leq 2N_t}T_{ij}$, the probability of $T_i^{\sf row}$ satisfies
    \begin{align}
        \Pr\big(T_i^{\sf row}> t\big)&=1-\Pr\l(\max_{1\leq j\leq 2N_t}T_{ij}\leq t\r)\nn\\
        &=1-\big(1-\Pr(T_{ij}> t)\big)^{2N_t}.
    \end{align}
    Applying Bernoulli's inequality, we know that 
    \begin{equation}
    	\l(1-\Pr(T_{ij}> t)\r)^{2N_t}\geq 1-2N_t\Pr(T_{ij}> t).
    \end{equation}
    Thus, for the non-negative random variable $T_i^{\sf row}$, its expectation is expressed as
    \begin{align}
        \mE[T_i^{\sf row}]&=\int_0^{\infty}\Big(1-\big(1-\Pr(T_{ij}> t)\big)^{2N_t}\Big)dt\nn\\
        &\leq \varepsilon+\int_{\varepsilon}^{\infty}\Big(1-\big(1-\Pr(T_{ij}> t)\big)^{2N_t}\Big)dt\nn\\
        &\leq\varepsilon+2N_t\int_{\varepsilon}^{\infty}\Pr(T_{ij}>t)dt,
    \end{align}
    where the inequality holds for any positive constant $\varepsilon>0$. To obtain the probability $\Pr(T_{ij}>t)$, we introduce the \emph{cumulative distribution function} (CDF) of $T_{ij}$, i.e., 
        $F_{T_{ij}}(t)=\text{erf}\l(\frac{3\mu t}{\sqrt{2}G_{\max}}\r)$.
    To ease the notation, we denote the parameter $\bar{\sigma}\triangleq\frac{G_{\max}}{3\mu}$. Then the probability is given by 
    \begin{equation}
    	\Pr(T_{ij}>t)=1-F_{T_{ij}}(t)=2Q(t/\bar{\sigma}),
    \end{equation}
    where $Q(\cdot)$ is the Q-function. Leveraging the inequality $Q(x)\geq\frac{1}{x\sqrt{2\pi}}\l(1-\frac{1}{x^2}\r)e^{-x^2/2}$ for $x>0$, we have
    \begin{align}
        \mE[T_i^{\sf row}]
        &\leq\varepsilon+4N_t\int_{\varepsilon}^{\infty}Q\l(\frac{t}{\bar{\sigma}}\r)dt\nn\\
        &=\varepsilon+2\sqrt{\frac{2}{\pi}}N_t\bar{\sigma}\exp\l(-\frac{\varepsilon^2}{2\bar{\sigma}^2}\r)-4N_t\varepsilon Q\l(\frac{\varepsilon}{\bar{\sigma}}\r)\nn\\
        &\leq \varepsilon+2\sqrt{\frac{2}{\pi}}N_t\frac{\bar{\sigma}^3}{\varepsilon^2}\exp\l(-\frac{\varepsilon^2}{2\bar{\sigma}^2}\r).
    \end{align}
    Substituting $\varepsilon=\bar{\sigma}\sqrt{2\ln{N_t}}$ into the inequality for $N_t>1$, we obtain 
    \begin{align}
        \mE[T_i^{\sf row}]&\leq \bar{\sigma}\sqrt{2\ln{N_t}}+\bar{\sigma}\sqrt{\frac{2}{\pi}}\frac{1}{\ln{N_t}}\nn\\
        &=\frac{\sqrt{2}G_{\max}}{3\mu}\l(\sqrt{\ln{N_t}}+\frac{1}{\sqrt{\pi}\ln{N_t}}\r).
    \end{align}
    Finally, according to \eqref{eqn: write one array vs one row}, the expected latency of updating the whole RRAM array is upper bounded by
    \begin{equation}
        T_{\sf write}\leq\frac{2\sqrt{2}G_{\max}}{3\mu}N_r\l(\sqrt{\ln{N_t}}+\frac{1}{\sqrt{\pi}\ln{N_t}}\r).
    \end{equation}
    This completes the proof.
    \end{proof}

    \item \textbf{Write-with-verification scheme}
    
    Recall that the latency of writing one RRAM device, say the $(i,j)$-th device, in this scheme is determined by the first passage time with the compound distribution as follows:
    \begin{equation}
    \begin{aligned}
        T_{ij}|\Delta G_{ij}&\sim\cI\cG\l({\Delta G_{ij}}/{\mu},{(\Delta G_{ij})^2}/{\sigma^2}\r),\\
        \Delta G_{ij}&\sim\l|\cN\l(0,{G_{\max}^2}/{9}\r)\r|.
    \end{aligned}
    \end{equation}
    where $\cI\cG(\cdot,\cdot)$ and $|\cN(\cdot,\cdot)|$ denote the inverse Gaussian distribution and half-normal distribution, respectively.
    \begin{theorem}
    (Expected Latency of Write-With-Verification). \emph{Consider writing the scaled real mapped channel matrix into an 1T1R array row-by-row using write-with-verification scheme. The expected write latency is upper bounded by
    \begin{equation}
    \begin{aligned}
        T_{\sf write}\leq 2N_r\times\min&\l\{\frac{2\sqrt{2}G_{\max}}{3\mu}\sqrt{\ln(4N_t)},\r.\\
        &~\l.\frac{2\sigma^2}{\mu^2}\ln(4N_t)+\frac{G_{\max}^2}{9\sigma^2}\r\}.
    \end{aligned}
    \end{equation}
    }
    \end{theorem}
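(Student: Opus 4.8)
The plan is to reuse the reduction already established for the write-without-verification theorem, namely $T_{\sf write}=2N_r\,\mE[T_i^{\sf row}]$ with $T_i^{\sf row}=\max_{1\le j\le 2N_t}T_{ij}$, and to control the expected maximum of the $2N_t$ i.i.d.\ latencies $\{T_{ij}\}$. The genuinely new difficulty is that each $T_{ij}$ now carries the \emph{compound} law in which $T_{ij}\mid\Delta G_{ij}\sim\cI\cG(\Delta G_{ij}/\mu,(\Delta G_{ij})^2/\sigma^2)$ while $\Delta G_{ij}$ is half-normal, so the closed-form tail used in the previous proof is no longer convenient. I would therefore replace the tail-integral estimate by a Chernoff/moment-generating-function (MGF) argument on the maximum: for every admissible $\lambda>0$,
\begin{equation}
\mE\l[\max_{1\le j\le 2N_t}T_{ij}\r]\le\frac{1}{\lambda}\ln\l(2N_t\,\mE[e^{\lambda T_{ij}}]\r),
\end{equation}
which follows from Jensen's inequality for $x\mapsto e^{\lambda x}$ together with bounding the maximum of the exponentials by their sum over the $2N_t$ columns.

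The core step is evaluating the MGF of the compound variable. First I would condition on $\Delta G_{ij}$ and invoke the inverse-Gaussian MGF,
\begin{equation}
\mE\l[e^{\lambda T_{ij}}\mid\Delta G_{ij}\r]=\exp\l(\frac{\mu\,\Delta G_{ij}}{\sigma^2}\l(1-\sqrt{1-\tfrac{2\sigma^2\lambda}{\mu^2}}\r)\r),\qquad 0<\lambda\le\frac{\mu^2}{2\sigma^2},
\end{equation}
so that, writing $c=\frac{\mu}{\sigma^2}\bigl(1-\sqrt{1-2\sigma^2\lambda/\mu^2}\bigr)\ge 0$, the unconditional MGF is just the half-normal MGF $\mE[e^{c\Delta G_{ij}}]$. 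Since $\Delta G_{ij}\sim|\cN(0,G_{\max}^2/9)|$, this equals $2\,e^{c^2G_{\max}^2/18}\,\Phi(cG_{\max}/3)\le 2\,e^{c^2G_{\max}^2/18}$; note the stray factor $2$ combines with the $2N_t$ columns to produce the $\ln(4N_t)$ appearing in the statement. Substituting and using the elementary inequality $1-\sqrt{1-x}\le x$ on $[0,1]$, which gives $c\le 2\lambda/\mu$, I arrive at the single-parameter estimate
\begin{equation}
\mE\l[\max_{1\le j\le 2N_t}T_{ij}\r]\le\frac{\ln(4N_t)}{\lambda}+\frac{2\lambda G_{\max}^2}{9\mu^2},\qquad 0<\lambda\le\frac{\mu^2}{2\sigma^2}.
\end{equation}

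It then remains to optimize the free parameter $\lambda$ over its admissible interval, and this is exactly where the two terms originate. The unconstrained minimizer is $\lambda^\star=\frac{3\mu}{G_{\max}}\sqrt{\ln(4N_t)/2}$; when it lies inside $(0,\mu^2/(2\sigma^2)]$ it yields the first term $\frac{2\sqrt2\,G_{\max}}{3\mu}\sqrt{\ln(4N_t)}$, whereas evaluating the same estimate at the right endpoint $\lambda=\mu^2/(2\sigma^2)$, the boundary of the domain on which the inverse-Gaussian MGF is finite, yields the second term $\frac{2\sigma^2}{\mu^2}\ln(4N_t)+\frac{G_{\max}^2}{9\sigma^2}$. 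Retaining the smaller of the two and multiplying by $2N_r$ delivers the claimed bound. I expect the main obstacle to be precisely this restricted domain of finiteness: because $T_{ij}$ is only sub-exponential rather than sub-Gaussian, $\lambda$ cannot be pushed beyond $\mu^2/(2\sigma^2)$, and the interplay between the interior optimizer and this hard boundary is what forces the two-regime minimum instead of a single clean expression.
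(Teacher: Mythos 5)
Your proposal is correct and follows essentially the same route as the paper's proof: the same reduction $T_{\sf write}=2N_r\,\mE[T_i^{\sf row}]$, the same Jensen/Chernoff bound on the expected maximum via the sum of MGFs, the same compound inverse-Gaussian/half-normal MGF evaluation (your $\Phi(\cdot)\le 1$ bound is the paper's $\mathrm{erf}(\cdot)\le 1$, and your $c\le 2\lambda/\mu$ is the paper's $x(\varepsilon)\le \tfrac{2}{\mu}\varepsilon$), and the same two-regime optimization over the admissible interval $(0,\mu^2/(2\sigma^2)]$ producing the minimum of the interior-optimum and endpoint values. No gaps; the treatment of the restricted MGF domain matches the paper's case analysis exactly.
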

    \begin{proof}
    Recall that $T_i^{\sf row}=\max_{1\leq j\leq 2N_t}T_{ij}$ with i.i.d. $\{T_{ij}\}_{j=1}^{2N_t}$. Applying Jensen's inequality, 
    \begin{align}
        \exp(\varepsilon\mE[T_i^{\sf row}])&\leq\mE[\exp(\varepsilon T_i^{\sf row})]\nn\\
        &=\mE\l[\max_{1\leq j\leq 2N_t}\exp(\varepsilon T_{ij})\r]\nn\\
        &\leq \sum_{j=1}^{2N_t}\mE\l[\exp(\varepsilon T_{ij})\r]\nn\\
        &=2N_t\mE\l[\exp(\varepsilon T_{ij})\r],
    \end{align}
    where the expectation follows the rule of compound distribution, that is
    \begin{equation}
        \mE\l[\exp(\varepsilon T_{ij})\r]=\mE_{\Delta G_{ij}}\big[\mE\l[\exp(\varepsilon T_{ij})|\Delta G_{ij}\r]\big].
    \end{equation}
    Given $\Delta G_{ij}$, the latency $T_{ij}$ follows an inverse Gaussian distribution, $T_{ij}\sim\cI\cG\l(\frac{\Delta G_{ij}}{\mu}, \frac{(\Delta G_{ij})^2}{\sigma^2}\r)$, whose \emph{moment-generating function} (MGF) is 
    \begin{equation}
        \mE\l[\exp(\varepsilon T_{ij})|\Delta G_{ij}\r]=\exp\!\l\{\frac{\mu\Delta G_{ij}}{\sigma^2}\!\l(1-\sqrt{1-\frac{2\sigma^2\varepsilon}{\mu^2}}\r)\!\r\}.
    \end{equation}
    To ease the notation, we define 
    \begin{equation}
    	x(\varepsilon)\triangleq\frac{\mu}{\sigma^2}\l(1-\sqrt{1-\frac{2\sigma^2\varepsilon}{\mu^2}}\r),
    \end{equation}
    which is an increasing function for $\varepsilon\in(0,\frac{\mu^2}{2\sigma^2}]$. Furthermore, we denote the standard variance of $\Delta G_{ij}$ as $\tilde{\sigma}$ whose value is $\tilde{\sigma}=\frac{G_{\max}}{3}$. Then, leveraging the MGF of the half-normal distribution, $\Delta G_{ij}\sim|\cN(0,\tilde{\sigma}^2)|$, we have
    \begin{align}
        \mE[\exp(x(\varepsilon)\Delta G_{ij})]=\exp\l(\frac{\tilde{\sigma}^2x(\varepsilon)^2}{2}\r)\l(1+{\rm erf}\l(\frac{\tilde{\sigma} x(\varepsilon)}{\sqrt{2}}\r)\r).
    \end{align}
    From the relation $\exp(\varepsilon\mE[T_i^{\sf row}])\leq2N_t\mE[\exp(\varepsilon T_{ij})]=2N_t\mE[\exp(x(\varepsilon)\Delta G_{ij})]$, we have
    \begin{align}
        \mE[T_i^{\sf row}]\leq &\frac{1}{\varepsilon}\ln{(2N_t)}+\frac{\tilde{\sigma}^2x(\varepsilon)^2}{2\varepsilon}\nn\\
        &+\frac{1}{\varepsilon}\ln\l(1+{\rm erf}\l(\frac{\tilde{\sigma}x(\varepsilon)}{\sqrt{2}}\r)\r).
    \end{align}
    It is not hard to prove the following facts: $x(\varepsilon)\leq\frac{2}{\mu}\varepsilon$ and ${\rm erf}(\cdot)\leq 1$, and thus we have
    \begin{align}\label{eqn: R58}
        \mE[T_i^{\sf row}]\leq \frac{1}{\varepsilon}\ln{(4N_t)}+\frac{2\tilde{\sigma}^2}{\mu^2}\varepsilon,
    \end{align}
    which holds for any $\varepsilon\in(0,\frac{\mu^2}{2\sigma^2}]$. This indicates the expected one-row latency $\mE[T_i^{\sf row}]$ is upper bounded by the minimum of the right-hand side of inequality \eqref{eqn: R58}. 
    
    1). If $\ln(4N_t)\leq\frac{\mu^2}{2\sigma^2}\l(\frac{\tilde{\sigma}}{\sigma}\r)^2$, the minimum value of the right-hand side of \eqref{eqn: R58} is
    \begin{align}
        \frac{1}{\varepsilon}\ln{(4N_t)}+\frac{2\tilde{\sigma}^2}{\mu^2}\varepsilon
        &\geq 2\sqrt{\frac{1}{\varepsilon}\ln{(4N_t)}\cdot\frac{2\tilde{\sigma}^2}{\mu^2}\varepsilon}\nn\\
        &=\frac{2\sqrt{2}G_{\max}}{3\mu}\sqrt{\ln(4N_t)},
    \end{align}
    and thus, we have 
    \begin{equation}
        \mE[T_i^{\sf row}]\leq\frac{2\sqrt{2}G_{\max}}{3\mu}\sqrt{\ln(4N_t)}.
    \end{equation}
    
    2). If $\ln(4N_t)>\frac{\mu^2}{2\sigma^2}\l(\frac{\tilde{\sigma}}{\sigma}\r)^2$, the minimum value of the right-hand side of inequality \eqref{eqn: R58} is achieved at $\epsilon=\frac{\mu^2}{2\sigma^2}$, so that 
    \begin{align}
        \frac{1}{\varepsilon}\ln{(4N_t)}+\frac{2\tilde{\sigma}^2}{\mu^2}\varepsilon
        &\geq \frac{2\sigma^2}{\mu^2}\ln(4N_t)+\l(\frac{\tilde{\sigma}}{\sigma}\r)^2\nn\\
        &=\frac{2\sigma^2}{\mu^2}\ln(4N_t)+\frac{G_{\max}^2}{9\sigma^2},
    \end{align}
    and thus, we have
    \begin{equation}
        \mE[T_i^{\sf row}]\leq\frac{2\sigma^2}{\mu^2}\ln(4N_t)+\frac{G_{\max}^2}{9\sigma^2}.
    \end{equation}
    
    By combining the results in the two cases, we have
    \begin{equation}
    \begin{aligned}
        &\mE\!\l[T_i^{\sf row}\r]\leq\\
        &\begin{cases}
        \frac{2\sqrt{2}G_{\max}}{3\mu}\sqrt{\ln(4N_t)},&\ln(4N_t)\leq\frac{\mu^2}{2\sigma^2}\l(\frac{G_{\max}}{3\sigma}\r)^2,\\ 
        \frac{2\sigma^2}{\mu^2}\ln(4N_t)+\frac{G_{\max}^2}{9\sigma^2},&\ln(4N_t)>\frac{\mu^2}{2\sigma^2}\l(\frac{G_{\max}}{3\sigma}\r)^2,
        \end{cases}        
    \end{aligned}
    \end{equation}
    which can be easily verified that it is equivalent to 
    \begin{equation}
    \begin{aligned}
        \mE\l[T_i^{\sf row}\r]\leq \min&\l\{\frac{2\sqrt{2}G_{\max}}{3\mu}\sqrt{\ln(4N_t)},\r.\\
        &~\l.\frac{2\sigma^2}{\mu^2}\ln(4N_t)+\frac{G_{\max}^2}{9\sigma^2}\r\}.
    \end{aligned}
    \end{equation}
    Finally, according to \eqref{eqn: write one array vs one row}, the expected latency of updating the whole RRAM array is upper bounded by
    \begin{equation}
    \begin{aligned}
        T_{\sf write}\leq 2N_r\times\min&\l\{\frac{2\sqrt{2}G_{\max}}{3\mu}\sqrt{\ln(4N_t)},\r.\\
        &~\l.\frac{2\sigma^2}{\mu^2}\ln(4N_t)+\frac{G_{\max}^2}{9\sigma^2}\r\}.
    \end{aligned}
    \end{equation}
    This completes the proof.
    \end{proof}
\end{itemize}

\subsection{Comparison with Digital CMOS Counterpart}
In this note, we provide the comparisons with the state-of-the-art (SOTA) digital CMOS counterparts, which is important to highlight the advantages of our RRAM-based baseband processor in terms of latency (or throughput) and energy efficiency. The calculations are based on the parameters in the Table \ref{table: comparison CMOS} unless specified otherwise.
\begin{table*}[t!]
\centering
\caption{Parameters for Comparison with CMOS-based Digital Processors}\label{table: comparison CMOS}
\begin{tabular}{|c|c|c|c|c|c|}
\hline
Parameter & DFT Size                                             & \# Slots/Frame & Symbols/Slot & Modulation & Number of Antennas                                                           \\ \hline
Value     & \begin{tabular}[c]{@{}c@{}}1024\\ 2048*\end{tabular} & 160            & 14           & 16-QAM     & \begin{tabular}[c]{@{}c@{}}4$\times$4\\ 2$\times$2$+$4$\times$4*\end{tabular} \\ \hline
\end{tabular}\\ \vspace{2mm}
*2048 and 2 $\times$ 2 + 4 $\times$ 4 are only used for comparison with Qualcomm Snapdragon X65.
\end{table*}

\subsubsection{Comparison with SOTA commercial modem}
SOTA commercial modems are fabricated using the latest technology node, targeting the 5G signal processing. Take the popular Qualcomm Snapdragon X65 modem as an example. It is fabricated using the TSMC 4nm process according to the public data \cite{snapdragonx65}. The maximum download speed can achieve up to 10 Gb/s over mmWave and sub-6 carrier aggregation. Note that the modulation used in this product is 256-QAM (8 bits/symbol) while we used 16-QAM (4 bits/symbol) in this work for the demonstration of our RRAM-based design. Therefore, the top speed is reduced by half as we match the QAM order of the two processors for a fair comparison, resulting in a peak throughput of 5 Gb/s as benchmark. In addition, the throughput of X65 modem is the combination of two separate bands and specifications, i.e., mmWave (2 $\times$ 2 MIMO) and sub-6 GHz (4 $\times$ 4 MIMO). For the same communication overhead, the baseband processing latency of our design with the same parameters (2048-point DFT, 2 $\times$ 2 MIMO + 4 $\times$ 4 MIMO) is only 0.2409 ms by re-simulation. It means our design can support a communication throughput as large as 455.8 Gb/s, which is 91.2$\times$ higher than that of the SOTA (i.e., Snapdragon X65 modem). However, it is hard to scale the throughput to fit the other group of settings specified in Table 1 which are used for other baselines, so that we just summarize it as $<$ 5 Gb/s in Table II, making it appear different from the values in other designs.  
On the other hand, the energy efficiency comparison is not feasible since the X65 modem is an integrated system-on-chip (SoC) that contains RFIC, control processor, digital baseband processor, and other units. Unfortunately, there is no detailed energy efficiency data related to baseband processing in this modem available in the public domain.
\subsubsection{Comparison with multi-core digital signal processor (DSP) reported in literature}
Next, we compare the proposed design with the reported powerful multi-core DSP in the literature, namely TMS320C6678 from Texas Instruments \cite{fryza2014power}. The performance achieves up to 128 GOPS while the average power consumption for the optimized digital processing function is 11.1 Watts. For fair comparison, we just take the power of this 8-core DSP into consideration while neglect the power of the development board TMDSEVM6678LE. The baseband processing workload is estimated by the algorithmic computation complexity, giving the value of 75.5 GOPs. Accordingly, the latency and energy consumption are calculated as 589.9 ms and 6.548 J, respectively. Meanwhile, ignoring the power of peripheral circuits, our RRAM-based design shows the latency and energy consumption being 0.2278 ms and 0.0079 mJ, and thus the equivalent computational throughput and energy efficiency are calculated as 331.4 TOPS and 9557 TOPS/Watt which outperforms this DSP by $10^3$ and $10^5$ times, respectively.
\subsubsection{Comparison with digital baseband processor reported in literature}
Moreover, we compare the proposed design with the reported digital baseband processor in the literature. Since the design of traditional digital baseband processor is considered a matured area, there are few recent publications on the complete baseband system design. As a compromise, we provide the following two comparisons: a) one with the SOTA domain adaptive processor (DAP) for wireless communication; b) the other with “virtually assembled” digital baseband processor by combining the SOTA designs of isolated digital baseband modules (i.e., DFT, MIMO detection, etc.) collected from the recent literatures.
\begin{itemize}
    \item[a) ] \emph{Comparison with SOTA digital adaptive processor (DAP):}
    The accelerator presented in the reference \cite{chen2022507} has been fabricated by a 12nm technology node and specialized for wireless communication workloads. Its peak performance reaches 264 GOPS at power consumption of 272 GOPS/Watt. One can observe it has a noticeable performance gain compared with commercial DSPs \cite{fryza2014power}. However, the algorithms for computing FFT and MMSE in this design are different from the discussions in Supplementary Note 9, so it is not reasonable to directly compare the computational performance. Hence, we evaluate the performance of the baseline in terms of communication throughput and energy efficiency in the way as follows. The measurement results of this DAP show the throughput and energy efficiency are 4.41G samples/s and 53.96 nJ/FFT for 256-point FFT. For a fair comparison, we consider the joint utilization of 16 DAPs to complete the 1024-point DFT by decomposing this large-scale DFT into 16 small-scale 256-point FFTs. Accordingly, the latency is saved at the cost of more area and energy consumption. On the other hand, the results from the measurements on MMSE MIMO detection reveal the throughput and energy efficiency being 1.95M matrices/s and 178.5 nJ/matrix, respectively. Moreover, there are additional reprogram times in OFDM demodulation (0.5$\mu$s) and MIMO detection (0.2$\mu$s). Accordingly, the latency of the assembled 16 DAPs is
    $14\times160\times4/(4.41\times10^9)+(14\times160-4)\times(1024/(1.95\times10^6\times16)+(0.5+0.2)\times10^{-6})=0.0750 s,$
    and the energy consumption is
    $14\times160\times4\times53.96\times10^{-9}\times16+(14\times160-4)\times1024\times178.5\times10^{-9}\times16=6.547J.$
    In comparison, our RRAM-based design is 329.2$\times$ faster and $>10^5$ more energy efficient than this baseline (i.e., 16-DAPs).
    \item[b) ] \emph{Comparison with the combination of isolated digital baseband modules:}
    For the DFT module, we use the high-throughput FFT processor proposed in \cite{liu2018high}. The processor is fabricated using 65nm CMOS technology and is designed to support 5G high-speed requirements. The clock frequency is 250MHz. The processing latency and energy consumption for one 1024-point FFT operation are estimated using the given values of 1080-point FFT (688 clock cycles) and 1200-point FFT (2.07 FFTs/$\mu$J), respectively. 

    For the L-MMSE MIMO detection module, consider the design presented in \cite{tang20192} which was fabricated using the 65nm CMOS technology. The clock frequency is 625MHz. The digital MIMO detector is based on the lower-upper (LU) decomposition which is a two-step algorithm. To estimate the computational workload of solving the related linear equations, the classical Gaussian elimination algorithm is adopted, whose complexity is contributed by two parts: the forward elimination and the backward substitution. Using the terminology, we count the forward substitution step once for channel matrix on all $N_c$ sub-carriers and backward substitution up to $(M-N)$ times for all received data symbols on each sub-carrier. As for the energy consumption, a measurement value of 19.2pJ/b is presented in the literature, which is equivalent to 19.2$\times$8=153.6pJ per MIMO detection. 

    For the combined baseband processing system, the sum processing latency of the two modules with respective 250MHz and 625MHz clocks for DFT and MIMO modules is given by:
    $688\times14\times160/(250\times10^6)+1024\times(24+12\times(14\times160-4))/(625\times10^6)=0.0502 s.$
    In contrast, the latency of our design completing the operations under the same parameters is only 0.2321ms, approximately 220.4$\times$ faster than the digital CMOS counterpart. 

    On the other hand, the energy consumption of the digital CMOS design is estimated as follows:
    $(14\times160/2.07)\times10^{-6}+153.6\times12\times(14\times160-4)\times1024\times10^{-12}=0.0053 J,$
    which is approximately 670.9$\times$ more than our proposed design. It is worth mentioning that the processing latency of the considered digital processor does not meet the stringent requirement of 5G. Traditionally, in order to reduce the processing time to below 10 ms, multiple MIMO detection units should be employed to enable parallel processing, with the sacrifices in chip area and energy consumption.
\end{itemize}
The discussions on comparison with CMOS-based digital processors are summarized in the Table II.

\begin{figure}[t!]
    \centering
    \includegraphics[width=0.48\textwidth]{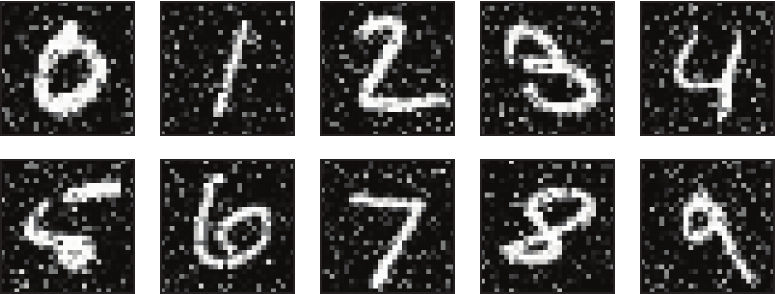}
    \caption{An illustration of noisy test images. In this example, the test images are uncoded transmitted over a noisy channel (SNR = 15dB) and recovered by RRAM-based baseband processing.}\label{fig: noisy images}
\end{figure}

\begin{figure}[t!]
    \centering
    \includegraphics[width=0.48\textwidth]{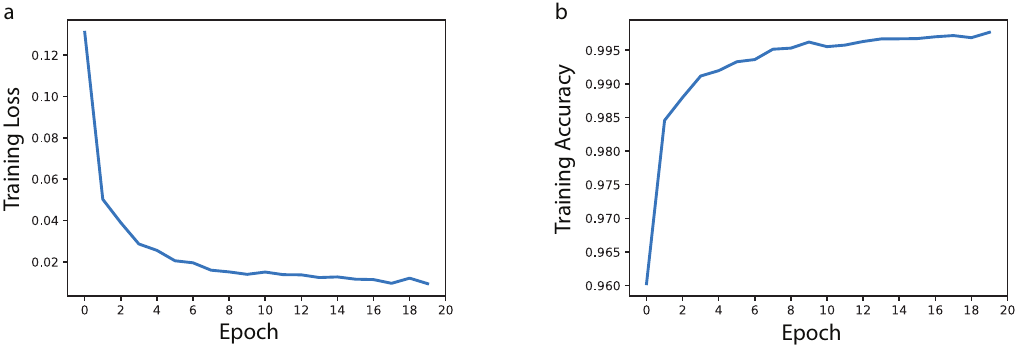}
    \caption{The history of training process. (a) Training loss versus epoch. (b) Training accuracy versus epoch.}
\end{figure}

\begin{table*}[t!]
\centering
\caption{Performance of RRAM-based Baseband Processing for Inference Task}\label{table: accuracy}
\footnotesize
\begin{tabular}{|c|c|c|c|}
\hline
\begin{tabular}[c]{@{}c@{}}Channel Condition\\ SNR (dB)\end{tabular} & \begin{tabular}[c]{@{}c@{}}Digital Baseband Processsing\\ Inference Accuracy (\%)\end{tabular} & \begin{tabular}[c]{@{}c@{}}RRAM-based Baseband Processing\\ Inference Accuracy (\%)\end{tabular} & Performance Loss (\%) \\ \hline
30                                                                   & 98.69                                                                                          & 98.58                                                                                            & 0.11                  \\ \hline
25                                                                   & 98.48                                                                                          & 98.34                                                                                            & 0.14                  \\ \hline
20                                                                   & 97.91                                                                                          & 97.48                                                                                            & 0.43                  \\ \hline
15                                                                   & 94.11                                                                                          & 93.82                                                                                            & 0.29                  \\ \hline
\end{tabular}
\end{table*}

\subsection{Neural Network Application}
In this note, we demonstrate that in-memory baseband processing is more effective for applications with less stringent precision requirements. For example, if the transmitted messages, such as images, are inputs to the downstream neural networks for inference, the models’ robustness against programming noise can ensure high classification accuracy. The simulation is divided into two steps: 1) image transmission; 2) neural network inference. The two procedures are simulated in C++ and Python (TensorFlow), respectively.
\subsubsection{Noisy Images}
Consider the scenario that images are transmitted from sender to destination over a noisy wireless channel. The receiver is equipped with the proposed RRAM-based baseband processor. The images then are fed into a neural network for classification. Due to channel and circuit noise, bit errors occur in the received images. In the simulation, we transmit the test dataset (10,000 images) of MNIST database over noisy channels (characterized by SNRs) using MIMO-ODFM air interface. Then the receiver decodes the images through RRAM-based baseband processing. The parameters follow the standard of 5G NR as well as the behavioral model of our fabricated RRAM device. The examples of resultant noisy test images are shown in Fig. \ref{fig: noisy images}. 
\subsubsection{Neural Network}
The neural network is located at the receiver side to perform classification tasks. The architecture is described as follows: The classifier model is implemented by a 5-layer convolutional neural network (CNN) that consists of two 3$\times$3 convolutional layer, each followed with a 2$\times$2 max pooling, two fully connected layers (the first with 512 units, the second with 64) with ReLU activation, and a final softmax output layer. The CNN classifier is trained using noiseless training dataset (60,000 images) of MNIST database. The objective of training is to minimize the loss function, which is chosen as cross entropy, using the Adam optimizer. The training process in terms of loss and accuracy is illustrated in Fig. 2. After training, the training and testing accuracy achieve up to 99.73\% and 98.74\%, respectively.
\subsubsection{Inference Task}
After transmission and baseband processing, the noisy images (i.e., test dataset) are fed into the trained CNN classifier for inference. We perform simulation under noisy channels with different SNRs. The inference performance is compared to the accuracy from noiseless test dataset, and the performance losses are calculated. The results are listed in Table \ref{table: accuracy}.

\begin{figure}[t!]
    \centering
    \includegraphics[width=0.49\textwidth]{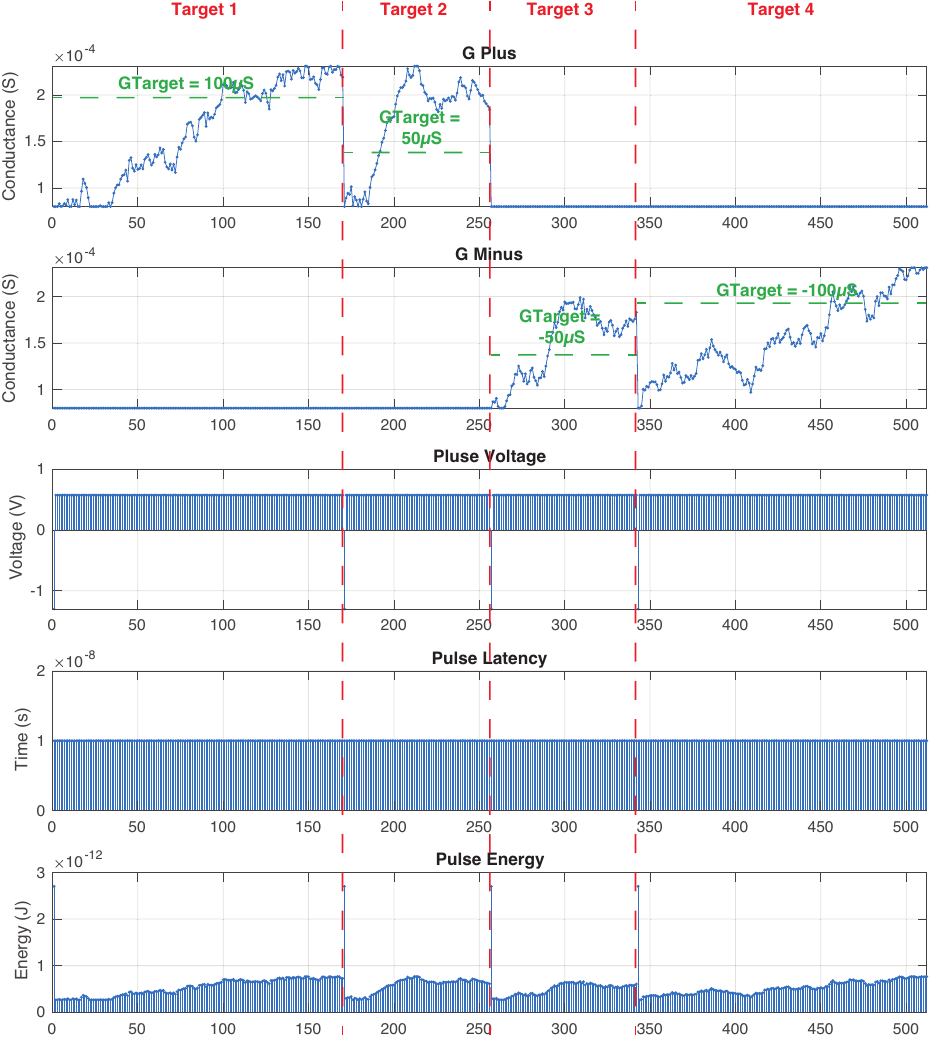}
    \caption{Conductance tuning process of a differential RRAM pair using write-without-verification scheme. (a) The trace of conductance changes of the RRAM device representing the positive value in the differential pair, denoted as G Plus. (b) The trace of conductance changes of the RRAM device representing the negative value in the differential pair, denoted as G Minus. (c) The voltages of write pulses applied to G Plus or G Minus. Each write pulse causes a corresponding conductance change of G Plus or G Minus. (d, e) The latency and energy induced by applying the corresponding write pulses.}\label{fig: 14}
\end{figure}
\begin{figure}[t!]
    \centering
    \includegraphics[width=0.48\textwidth]{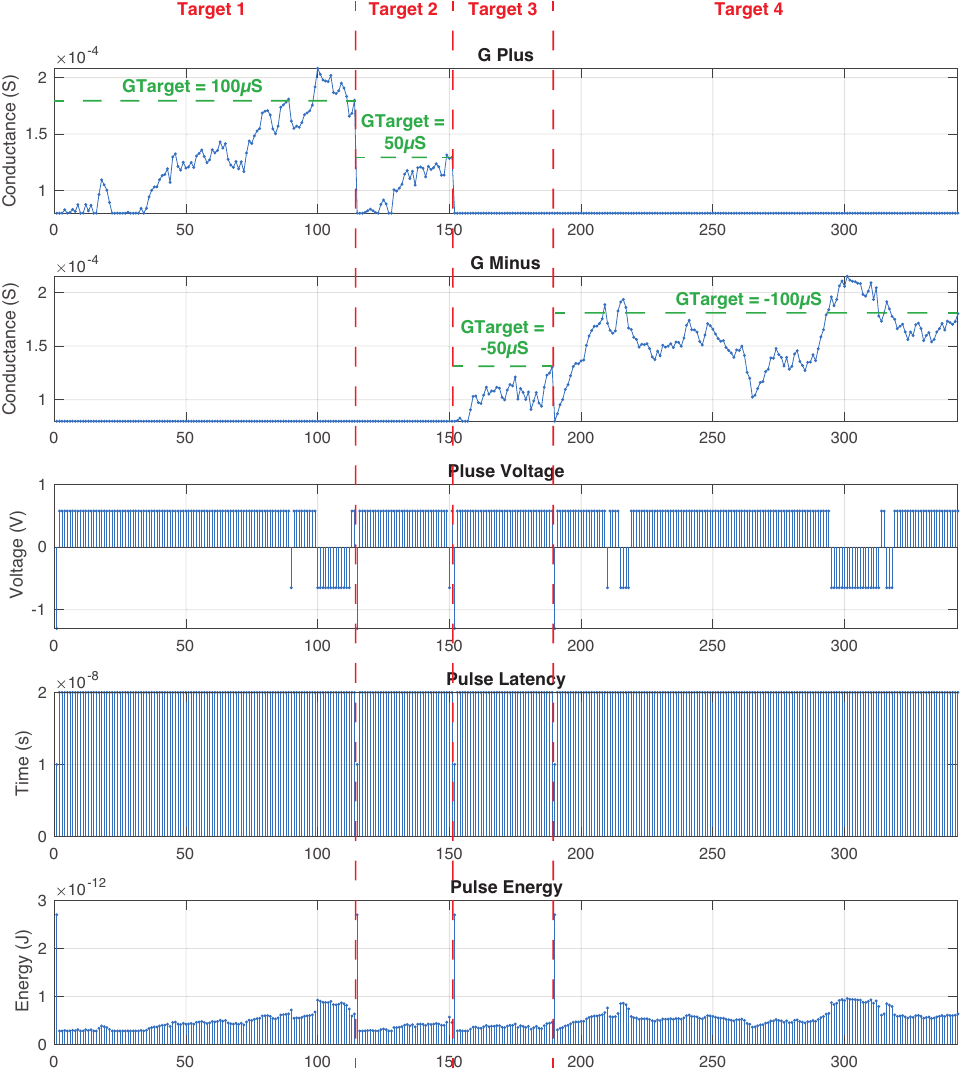}
    \caption{Conductance tuning process of a differential RRAM pair using write-with-verification scheme. (a) The trace of conductance changes of the RRAM device representing the positive value in the differential pair, denoted as G Plus. (b) The trace of conductance changes of the RRAM device representing the negative value in the differential pair, denoted as G Minus. (c) The voltages of write pulses applied to G Plus or G Minus. Each write pulse causes a corresponding conductance change of G Plus or G Minus. (d, e) The latency and energy induced by applying the corresponding write pulses.}\label{fig: 15}
\end{figure}

\subsection{Simulations of Conductance Programming}
In this note, we discuss the modeling of RRAM’s writing process in our simulations. The simulation codes are developed in C++, complied in Clang++-13, and run on Linux system. The simulation is based on the measurement of conductance programming characteristic of our fabricated RRAM device. We considered two schemes: 1) “write-without-verification”, which pre-determines the number and magnitude of write pulses without monitoring the conductance change during the programming process; 2) “write-with-verification”, which used a read pulse after each write pulse to ensure the conductance of the RRAM device can be set to high accuracy. For comparison and clarification, we summarize the two writing schemes as follows:
\begin{itemize}
    \item Write-without-verification scheme: The number of pulses and the voltage direction are determined based on the fabricated RRAM behavioral model shown as the curve presented in Fig. \ref{Fig.2}(g). Then the write pulses are applied to the RRAM device regardless of its intermediate states, i.e., no read operation is applied after each write pulse.
    \item Write-with-verification scheme: A read pulse is applied after each write pulse to get the RRAM device’s conductance. In this way, the state of RRAM is monitored in real-time during the writing process. Based on the read-out conductance value, the controller determines whether the target is met, or additional positive (or negative) pulse should be applied. 
\end{itemize}
We emphasize that the simulation is more authentic than the model in theoretical analysis which is simplified to obtain the theoretical scaling law. The simulation results come from the more accurate modeling of RRAM’s behaviour and complicated programming algorithm. To be specific, the pulses supplied are discrete and thus the conductance change may exceed the threshold. Then the negative pulses will be applied to fine tune the conductance of RRAM. The process may involve multiple stages of positive pulses and negative pulses before convergence to the target value. Moreover, the read noise is included in our simulations ($\sim$1$\mu$S), which also affects the precision of conductance programming and thus deteriorate the performance. We present  Fig. \ref{fig: 14} and  Fig. \ref{fig: 15} to illustrate the evolution of conductance values, supplied voltage pulses, latency, and energy consumption during updating a RRAM differential pair from initial state to conductance targets. For convenience, we name the RRAM device representing the positive value in the differential pair as G Plus, and the other one as G Minus. 
The conductance programming follows the exact procedures described as follows:
\begin{itemize}
    \item Step 1: Initialization. A negative pulse with high voltage (-1.5V) is applied to both G Plus and G Minus, such that the two RRAM devices are fully reset to the minimum conductance state.
    \item Step 2: Tuning. The write pulses are applied to either G Plus or G Minus, depending on the conductance target. If the target is positive, only G Plus is tuned while G Minus remains unchanged during the writing process, vice versa. For write-without-verification scheme, the pre-determined number of pulses are supplied to the RRAM device. For write-with-verification scheme, the iterations between write pulse (0.65V/-0.575V) and read pulse (0.15V) proceed until the conductance of the RRAM device is close enough to the target value within a tolerable error range. 
\end{itemize}
Our simulations are validated by the congruence between the conductance tuning trace depicted in Fig. \ref{fig: 14} and Fig. \ref{fig: 15}, and the corresponding real measurement in the experiments.

\subsection{Mapper and Demapper Modules}
In this note, we present the designs for mapper and demapper modules. The mapper module is a binary to Gray code converter, which converts binary code to Gray code at the beginning of baseband processing at the transmitter side. On the contrary, the demapper module is a Gray to binary code converter, which converts Gray code back to binary code at the end of baseband processing at the receiver side.
\subsubsection{Gray Code}
The Gray code, also known as reflected binary code, is defined as an ordering of the binary numeral system such that the two adjacent values only differ in one bit. Table \ref{table: gray code} shows the relation between decimal numbers, binary and Gray codes.
\begin{table}[t!]
\centering
\caption{The conversion between binary code to Gray code and decimal to Gray code}\label{table: gray code}
\begin{tabular}{|c|c|c|}
\hline
Decimal Numbers & Binary   Code & Gray   Code \\ \hline
0               & 0000          & 0000        \\ \hline
1               & 0001          & 0001        \\ \hline
2               & 0010          & 0011        \\ \hline
3               & 0011          & 0010        \\ \hline
4               & 0100          & 0110        \\ \hline
5               & 0101          & 0111        \\ \hline
6               & 0110          & 0101        \\ \hline
7               & 0111          & 0100        \\ \hline
8               & 1000          & 1100        \\ \hline
9               & 1001          & 1101        \\ \hline
10              & 1010          & 1111        \\ \hline
11              & 1011          & 1110        \\ \hline
12              & 1100          & 1010        \\ \hline
13              & 1101          & 1011        \\ \hline
14              & 1110          & 1001        \\ \hline
15              & 1111          & 1000        \\ \hline
\end{tabular}
\end{table}
\begin{figure}[h]
    \centering
    \includegraphics[width=0.48\textwidth]{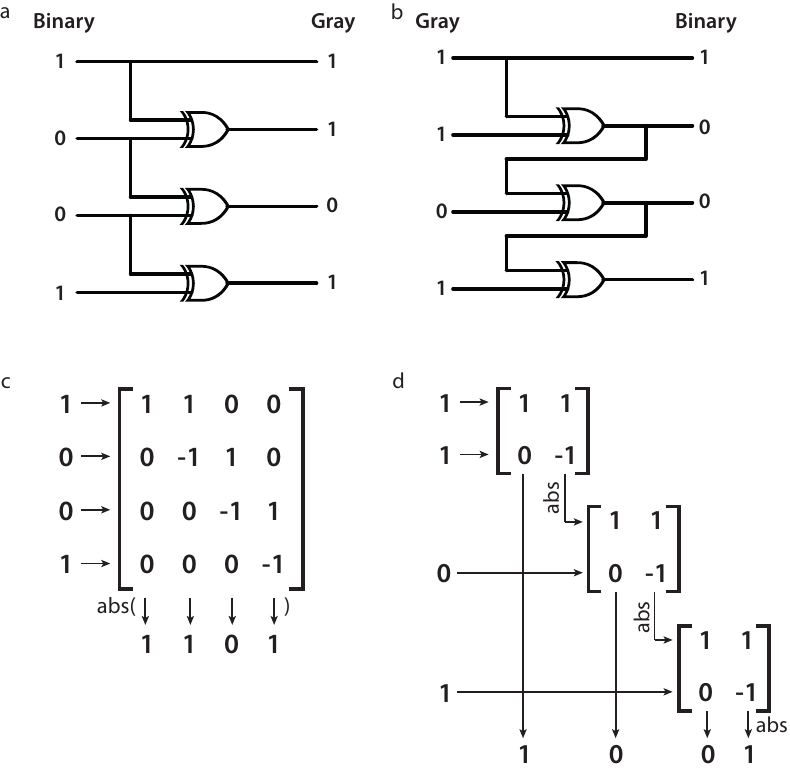}
    \caption{An illustration of the implementations of mapper and demapper. (a) At the transmitter, the logical circuit maps binary bits to Gray codes. After that, they are transformed to analogue values using DAC. (b) At the receiver, the outputs from ADC are Gray coded. The logical circuit maps Gray code back to binary bits. (c) At the transmitter, the codebook as shown in the matrix form is stored in RRAM array. The binary bits are translated as input voltages for the RRAM array. The magnitudes of outputs give the Gray codes. (d) At the receiver, the matrix as shown is stored in three RRAM arrays with the connection between them. The Gray codes are translated as input voltages for them, and the outputs give the binary bits.}\label{fig: 16}
\end{figure}
\subsubsection{Mapper Module}
As for binary to Gray code conversion, several observations can be made from Table \ref{table: gray code} as follows: 1) The first bit of Gray code is equal to the most significant bit of binary code; 2) The second bit of Gray code is the XOR of the first and second bits of the binary code; 3) The third bit of Gray code is the XOR of second and third bits of the binary code; 4) The fourth bit of Gray code is the XOR of third and fourth bits of the binary code. Based on these observations, the logical circuit for binary to Gray code converter (i.e., mapper) is presented in Fig. \ref{fig: 16}(a). To implement the mapper with RRAM array(s), we transform the operations of binary to Gray code converter into matrix-vector multiplication(s). An illustration of the principle of RRAM-based mapper module is shown in  Fig. \ref{fig: 16}(c). The circuit architecture is presented in  Fig. \ref{fig: 17}(a).

\subsubsection{Demapper Module}
As for Gray to binary code conversion, several observations can be made from Table \ref{table: gray code} as follows: 1) The most significant bit of binary code is equal to the first bit of Gray code; 2) If the second bit of Gray code is 0, the second bit of binary code is the same as the previous bit. If the second Gray bit is 1, the second binary bit is the opposite of the previous bit; 3) The operations in 2) continues for the remaining bits. Based on these observations, the logical circuit for Gray to binary code converter (i.e., demapper) is presented in Fig. \ref{fig: 16}(b). To implement the demapper with RRAM array(s), we transform the operations of Gray to binary code converter into matrix-vector multiplication(s). An illustration of the principle of RRAM-based demapper module is shown in Fig. \ref{fig: 16}(d). The circuit architecture is presented in Fig. \ref{fig: 17}(b).

\begin{figure*}[t]
    \centering
    \includegraphics[width=0.75\textwidth]{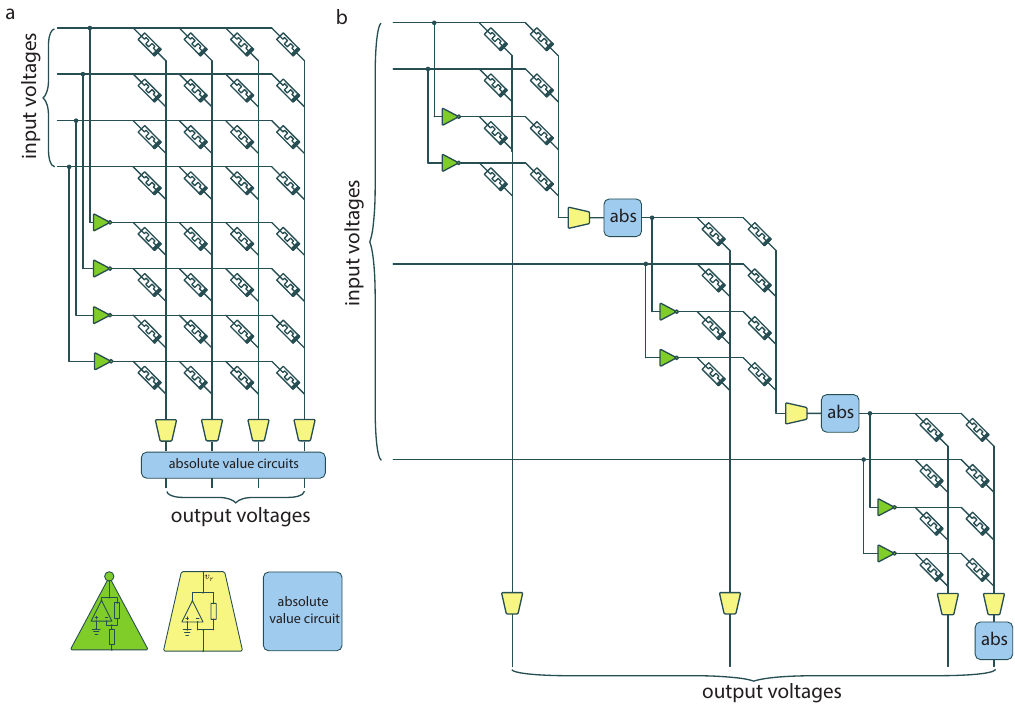}
    \caption{The circuit design of RRAM-based mapper and demapper. (a) The architecture of RRAM-based mapper module. (b) The architecture of RRAM-based demapper module.}\label{fig: 17}
\end{figure*}

\bibliography{reference.bib}
\bibliographystyle{IEEEtran}

\end{document}